\def\squarebox#1{\hbox to #1{\hfill\vbox to #1{\vfill}}}
\newtheorem{thm}{Theorem}
\begin{document}

\title{Flow Demands Oriented Node Placement in Multi-Hop Wireless Networks}


\author{Zimu Yuan  \\
Institute of Computing Technology, CAS, China\\
\{zimu.yuan\}@gmail.com
}

\maketitle \thispagestyle{empty}

\begin{abstract}
In multi-hop wireless networks, flow demands mean that some nodes have routing demands of transmitting their data to other nodes with a certain level of transmission rate. When a set of nodes have been deployed with flow demands, it is worth to know how to construct paths to satisfy these flow demands with nodes placed as few as possible. In this paper, we study this flow demands oriented node placement problem that has not been addressed before. In particular, we divide and conquer the problem by three steps: calculating the maximal flow for single routing demand, calculating the maximal flow for multiple routing demands, and finding the minimal number of nodes for multiple routing demands with flow requirement. During the above solving procedure, we prove that the second and third step are NP-hard and propose two algorithms that have polynomial-time complexity. The proposed algorithms are evaluated under practical scenarios. The experiments show that the proposed algorithms can achieve satisfactory results on both flow demands and total number of wireless nodes.
\end{abstract}



\section{Introduction}
Multi-hop wireless networks have gained a lot of attentions in the past few years. One of key design issues in multi-hop wireless networks is node placement. By careful node placement, we can make multi-hop wireless networks achieve special design goals. For example, studies on node placement are related to wide topics such as network traffic \cite{SAC13_throughput}\cite{MNA08_throughput}\cite{JSAC12_throughput}, network coverage \cite{WiOpt03_Coverage}\cite{Mobi04_Coverage}\cite{Info03_Coverage}, network survivability \cite{Info06_survivability}\cite{TOC07_survivability}\cite{TON10_survivability}, fault-tolerant \cite{HPSR04_fault}\cite{TMC07_fault}\cite{Info07_fault}, energy saving \cite{ComputerNetwork08_energy}\cite{Secon05_energy}\cite{TWC05_energy}, and etc.

In this paper, we study the problem of \emph{flow demands oriented node placement}, i.e., how to use less wireless nodes to satisfy flow requirements for multi-hop wireless networks. Here, flow requirements indicate that the data should be transmitted at a certain level of transmission rate. Existing studies on node placement related to network traffic \cite{SAC13_throughput}\cite{MNA08_throughput}\cite{JSAC12_throughput} mainly address the problem of optimizing the network throughput. However, the flow demands has significant difference from the throughput demands and new methods are needed to satisfy flow demands for node placement. Actually, it is not straightforward to solve the problem of flow demands oriented node placement. Therefore, we divide and conquer the problem by three steps: at first, we calculate the maximal flow for single routing demand, which is the basis of analyzing multiple routing demand; then, we calculate the maximal flow for multiple routing demands according to the result on single routing demand; finally, based on the maximal flow calculated for multiple routing demand, we try to merge routing paths to achieve minimal number of wireless nodes. The above procedure involves several proofs and related algorithm design, i.e., two problems are proved as NP-hard and two heuristic algorithms are proposed. In evaluation, we verify the efficiency of the proposed algorithms by examining \emph{average satisfied rate} (defined in Section IV) of flow demands and the number of nodes used for placement under the practical scenarios of data aggregation, demands with definite flow requirement and nodes with unknown flow requirement.

To the best of our knowledge, the problem of satisfying the flow demands by node placement in multi-hop wireless networks has not been studied yet. The contributions are summarized as follows:

\begin{enumerate}
  \item \emph{The complexity of flow demands oriented node placement is theoretically analyzed.} For single routing demand, the theoretical maximal flow can be achieved under the interference model is conducted. For multiple routing demands, the proof of NP-hard to obtain the maximal flow under the interference between routing paths, and the proof of NP-hard to minimize the number of nodes placed by merging routing paths are given.
  \item \emph{A novel approach is proposed to use less nodes to satisfy the flow requirements for node placement.} For multiple routing demands, a polynomial-time complexity algorithm is given to achieve larger flow as possible by finding the heaviest interference node and assigning activated time slots on each routing path, and a polynomial-time complexity algorithm is given to place relay nodes as fewer as possible by prior to merge the constructed routing paths with more excessive flow capacity. Both proposed algorithms are discussed with their worst case.
  \item \emph{The efficiency of the proposed algorithms are verified through several practical scenarios.} Our experiments show that Average Satisfied Rate of flow demands reduces slowly when increasing the level of flow requirements, and the number of nodes for placement is reduced in average of 25.4\%, 26.6\% and 24.1\%, both of which prove the efficiency of proposed algorithms.
\end{enumerate}

The rest of the paper is organized as follows. Section \ref{model} introduces the model assumption. Section \ref{node_placement} studies the flow demands oriented node placement problem and proposes the algorithm. Section \ref{experiment} evaluates the proposed algorithm. Section \ref{conclusion} concludes the work.

\section{The Problem and Related Work} \label{model}
Suppose that a set of nodes have been deployed in a plane. Some of these nodes may have routing demands of transmitting their data to other nodes with flow requirements that should be achieved. Then, relay nodes are needed to be placed to route and satisfy these flow demands. We assume that each node is equipped with a radio, and the radio has the its maximal transmission range $r$. If two nodes, $n_1$ and $n_2$, that are located with their distance $Dist(n_1,n_2)\leq R$ can interfere the transmission between each other. With respect to a successful transmission of two nodes, we consider the \emph{Protocol Interference Model}. A transmission from node $n_1$ to $n_2$ is successful if and only if 1) $Dist(n_1,n_2) \leq r$; 2) there does not exist a transmission node $n_3$ such that $Dist(n_1,n_3) \leq R$. This interference model is widely used in references like \cite{TIT00_capacity}\cite{Mobi03_Interference}. Besides, we use a time slotted system. In the time slotted system, the time is divided into equal length slots, and the transmission between nodes are synchronized. We define $f$ as the maximal flow (or maximal transmission rate) that can be transmitted in a single time slot.

For the set of nodes with flow requirements, we model that there are total $m$ pair of routing demands $(src_q, dest_q)$ with flow requirement $F_A(src_qdest_q)$ should be satisfied, $q=1,2,...,m$. In this paper, we aim to place relay nodes to construct paths to satisfy the flow requirement $F_A(src_qdest_q)$ between $(src_q, dest_q)$, $q=1,2,...,m$, while use as fewer placed nodes as possible. We named it as the flow demands oriented node placement problem for short.

As a matter of fact, the above problem has not been studied in previous research. Some of existing studies \cite{SAC13_throughput}\cite{MNA08_throughput}\cite{JSAC12_throughput} on node placement address the throughput issue. These studies mainly focus on placing nodes to optimize the network throughput. The scenario of collecting all node transmissions towards sink nodes is considered in \cite{SAC13_throughput}. A grid-based relay nodes placement method is used to optimize the network throughput in  \cite{MNA08_throughput}. Two objectives are studied in \cite{JSAC12_throughput}. The one is to maximize the minimum throughput for any relay node, and the other is to maximize the total throughput of the network. However, the throughput optimization of the whole network is not equal to the issue of satisfying the flow requirements of some nodes, and also, the flow requirements of different nodes may be different, which should be with differential treatment. The flow requirements cannot be satisfied within throughput optimization framework. Besides, the wireless interference, an intrinsic characteristic of wireless channel, among nodes is also seldom considered in these studies. We study the flow demands oriented node placement problem to fill this blank.



\section{Flow Demands Oriented Node Placement} \label{node_placement}

\subsection{Methodology}

In this section, we solve the problem of flow demands oriented node placement in following three steps:
\begin{enumerate}
  \item \emph{Calculating the maximal flow for single routing demand.} At first, we find the maximal flow can be achieved by constructing paths for a single routing demand $(src_1, dest_1)$.
  \item \emph{Calculating the maximal flow for multiple routing demands.} Then, we prove that constructing paths to maximize the flow for multiple routing demands $(src_q, dest_q)$, $q=1,2,...,m$ is NP-hard, and try to design an algorithm to achieve greater flow as possible for multiple routing demands under the interference between paths.
  \item \emph{Finding the minimal number of nodes for multiple routing demands with the flow requirement. }Finally, we also prove the problem of reducing the number of nodes placed to the minimal with flow requirement $F_A(src_qdest_q)$ for $(src_q, dest_q)$, $q=1,2,...,m$ is NP-hard, and try to propose our algorithm to reduce the number of nodes placed as possible while satisfying the flow requirement.
\end{enumerate}

\subsection{Calculating The Maximal Flow for Single Routing Demand}

Suppose that there is a node pair $(src_1,dest_1)$ with routing demand from source node $src_1$ to destination node $dest_1$. Let $p_{src_1dest_1,1}$ be the path constructed between $src_1$ and $dest_1$, and $f_{src_1dest_1}$ be the flow between $src_1$ and $dest_1$. Assume that the interference range $R$ is in $[jr,(j+1)r)$, $j \in \mathbb{N^+}$, and the distance $Dist(src_1,dest_1)$ between $src_1$ and $dest_1$ has $Dist(src_1,dest_1)\geq 2R$\footnote{Node $src_1$ and $dest_1$ can directly communicate or communicate with only a few of relay nodes if we have $Dist(src_1,dest_1)<2R$, so it is meaningless to study the node placement problem under the condition of $Dist(src_1,dest_1)\leq 2R$. Also, it will be involved in complex discussion of cases under this condition}. We give a theorem as follows:

\begin{thm} \label{t1}
With a single path constructed, the flow $f_{src_1dest_1}$ between $src_1$ and $dest_1$ can reach $F_1 = \frac{f}{j+1}$ at most.
\end{thm}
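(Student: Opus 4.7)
The plan is to prove matching upper and lower bounds of $f/(j+1)$, with the upper bound coming from the Protocol Interference Model applied to consecutive nodes on the path, and the lower bound from an explicit placement and time-slot schedule.

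First I would establish the upper bound. Let the constructed path consist of nodes $v_0=src_1, v_1,\ldots, v_n=dest_1$. Since every link must satisfy $Dist(v_i,v_{i+1})\le r$, the triangle inequality gives $Dist(v_i,v_{i+k})\le kr$ for every $k$. In particular, any block of $j+1$ consecutive nodes $v_i,v_{i+1},\ldots,v_{i+j}$ pairwise lies within distance $jr\le R$, so by the Protocol Interference Model they pairwise interfere; hence at most one of them can be a transmitter in any given time slot. Letting $x_k$ denote the long-run fraction of slots in which $v_k$ transmits, this gives the constraint $\sum_{k=i}^{i+j}x_k \le 1$ for every $i$. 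Since the link $(v_k,v_{k+1})$ carries flow exactly $x_k f$, the path flow $f_{src_1dest_1}=\min_k x_k f$ satisfies $(j+1)\cdot f_{src_1dest_1}/f \le 1$, i.e., $f_{src_1dest_1}\le f/(j+1)$.

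Next I would exhibit a matching construction. Place relay nodes on the line segment $\overline{src_1\,dest_1}$ with uniform spacing $r$ (adjusting the last few spacings slightly below $r$ so that $dest_1$ is reached exactly, while keeping the spacing close enough to $r$ that any two nodes at graph-distance $j+1$ are separated by more than $R$; this is possible because $R<(j+1)r$ leaves a positive margin). Consecutive nodes then satisfy the transmission constraint $Dist(v_i,v_{i+1})\le r$. Schedule node $v_k$ to transmit in time slot $k\bmod(j+1)$; any two simultaneously transmitting nodes are at distance at least $(j+1)\cdot\text{(spacing)}>R$, so by the Protocol Interference Model no interference occurs. Each $v_k$ therefore transmits in fraction $1/(j+1)$ of all slots, so every link carries flow $f/(j+1)$ and $f_{src_1dest_1}=f/(j+1)$ is realized.

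The main obstacle is really just the geometric calibration step in the construction: the spacing has to be simultaneously $\le r$ (so links are usable) and strictly greater than $R/(j+1)$ (so $(j+1)$-hop pairs no longer interfere), and one must fit exactly between the prescribed endpoints $src_1$ and $dest_1$. Since $Dist(src_1,dest_1)\ge 2R$ and $R<(j+1)r$, the window $(R/(j+1),\,r]$ of admissible spacings is nonempty and one can always choose the number of relays to land the spacing inside it; this is a routine but necessary check. Once this is settled, combining the upper and lower bounds yields $F_1=f/(j+1)$.
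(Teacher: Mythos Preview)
Your proposal is correct and follows essentially the same approach as the paper: both argue that the $j{+}1$ consecutive links emanating from $src_1$ (more generally, any $j{+}1$ consecutive transmitters along the path) lie within mutual interference range and hence require $j{+}1$ distinct slots, and both exhibit the periodic $\bmod\,(j{+}1)$ schedule on a straight-line placement to achieve $f/(j{+}1)$. Your write-up is in fact more complete than the paper's, since you separate the upper and lower bounds cleanly and make explicit the geometric feasibility check (spacing in $(R/(j{+}1),\,r]$) that the paper leaves implicit.
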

\begin{proof}
Assume that the distance between $src_1$ and $dest_1$ has $Dist(src_1,dest_1) \in (ir,(i+1)r]$. At least $i$ nodes should be placed to construct a path $p_{src_1dest_1,1}$ between $src_1$ and $dest_1$ as shown in figure \ref{fig1}.
\begin{figure}[!htb]
\centering
\includegraphics[width=3.0in]{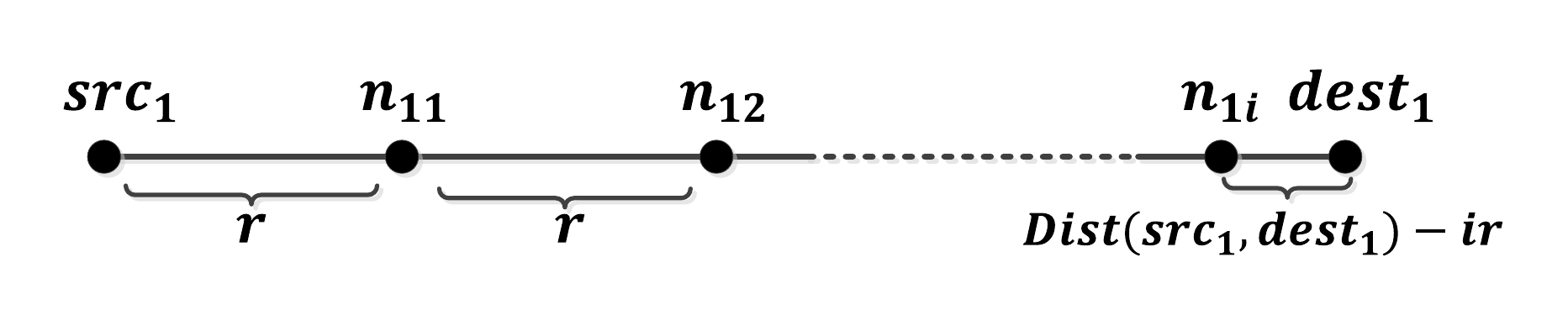}\\
\caption{\textrm{A single path of $(src_1,dest_1)$}} \label{fig1}
\end{figure}
With the condition $R \in [jr,(j+1)r)$, $j \in \mathbb{N^+}$, we know that node $n_{11}$, $n_{12}$, ..., $n_{1j}$ are within interference range of $src_1$. The total $j+1$ links of $src_1n_{11}$, $n_{11}n_{12}$, ..., $n_{1j-1}n_{1j}$ should be assigned with different time slots to avoid mutual interference in transmission. The other links with transmission nodes out of the interference range of $src_1$ can be assigned with the time slots that have been used, i.e. $Slot(n_{1j+1}n_{1j+2})=Slot(src_1n_{11})=1$, $Slot(n_{1j+2}n_{1j+3})=Slot(n_{11}n_{12})=2$, ..., $Slot(n_{1i-1}n_{1i})=(i-1)\%(j+1)+1$ and $Slot(n_{1i}dest_1)=i\%(j+1)+1$. There are total $j+1$ time slots assigned to the path $p_{src_1dest_1,1}$.
\end{proof}

As proved, the flow $f_{src_1dest_1}$ between $src_1$ and $dest_1$ can reach $F_1$ at most with one path $p_{src_1dest_1,1}$ constructed. With multiple paths constructed, the flow $f_{src_1dest_1}$ can reach a larger value. We have the following theorem:

\begin{thm} \label{thm_Fc}
With multiple paths constructed, the flow $f_{src_1dest_1}$ between $src_1$ and $dest_1$ can reach $F_C$ at most.
\begin{equation} \label{h1}
F_C = \max_{c=1,2,...,C} \frac{cf}{s_c}
\end{equation}
Where $s_c = \max \limits_{q=1,2,...,j} \{ \; q+1+\sum_{m=2}^c \{ \; \max \{ \; floor(\frac{x_m}{r}), \; 0 \} + 1 \; \} \; \}$ and $x_m =\sqrt{q^2r^2 \cos^2{\frac{2 \pi (m-1)}{c}} - q^2r^2 + R^2} + qr\cos{\frac{2 \pi (m-1)}{c}}$.
\end{thm}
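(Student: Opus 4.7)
The plan is to generalize the single-path argument of Theorem~\ref{t1} by placing $c$ paths in a rotationally symmetric pattern around $src_1$ and counting the total number of distinct time slots demanded by the worst-interfering node.

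First I would fix an integer $c \in \{1,2,\ldots,C\}$ and construct the candidate path configuration. I put $c$ paths $p_{src_1dest_1,m}$ ($m=1,\ldots,c$) emanating from $src_1$ along the directions $\theta_m = 2\pi(m-1)/c$, each a straight chain of relay nodes spaced at distance $r$ (so path $m$ has node $n_{mk}$ at radial distance $kr$ from $src_1$). By the same reasoning as Theorem~\ref{t1}, each individual path, viewed in isolation, could carry flow $f/(j+1)$ using $j+1$ slots; what remains is to quantify how many additional slots path $m$ forces on the schedule because of cross-path interference near $src_1$.

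Next I would identify the bottleneck. By symmetry and the protocol interference model, the most heavily interfered region is at $src_1$ and along the first $j$ relay nodes of any path (where the interference disk of $src_1$ lies). For a candidate ``witness'' node $n_{1q}$ with $1 \le q \le j$ on path $1$, I would enumerate all transmitter positions within distance $R$ of $n_{1q}$ on every other path. The distance from $n_{1q}$ (at radial $qr$, angle $0$) to a node on path $m$ at radial $kr$ and angle $\theta_m$ is $\sqrt{(qr)^2+(kr)^2-2qr\cdot kr\cos\theta_m}$, and setting this $\le R$ and solving for $k$ gives exactly
\begin{equation*}
kr \;\le\; qr\cos\theta_m + \sqrt{q^2r^2\cos^2\theta_m - q^2r^2 + R^2} \;=\; x_m,
\end{equation*}
which recovers the formula defining $x_m$. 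Hence the number of interfering outgoing links contributed by path $m$ at the witness $n_{1q}$ is $\max\{\lfloor x_m/r\rfloor,0\}+1$ (the ``$+1$'' accounts for the link originating at $src_1$ on path $m$, and the $\max$ handles the case where no node of path $m$ other than $src_1$ falls inside $R$). Summing over $m=2,\ldots,c$ and adding the $q+1$ slots demanded by path $1$ up through $n_{1q}$ yields the quantity whose maximum over $q$ is $s_c$, and by a round-robin assignment of the $s_c$ distinct slots to these mutually interfering links (and reuse of slots by links whose transmitters sit outside all interference disks, exactly as in Theorem~\ref{t1}) this schedule is realizable.

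Finally I would close both directions. For achievability, the construction above transmits at rate $f/s_c$ per path for a total of $cf/s_c$, so $F_C \ge \max_c cf/s_c$. For the upper bound, any feasible schedule on any $c$-path topology must assign distinct slots to the set of pairwise-interfering links identified at the witness $n_{1q}$, so the cycle length is at least $s_c$ and the aggregate flow is at most $cf/s_c$; optimizing the choice of $c$ gives $F_C$. The hard part will be justifying that the symmetric equi-angular placement is optimal among all $c$-path geometries: I expect to argue that any non-symmetric placement either crowds two paths together (strictly increasing the interfering-link count at $src_1$, hence $s_c$) or leaves angular gaps that could be exploited by a symmetric rearrangement, so the symmetric layout minimizes $s_c$ for each fixed $c$ and the stated formula is indeed both an upper bound and attainable.
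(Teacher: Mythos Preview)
Your proposal is correct and follows essentially the same approach as the paper: both use the equi-angular $c$-path construction around $src_1$, locate the worst-interfering witness among $n_{11},\ldots,n_{1j}$ via the law of cosines to derive $x_m$, and count the required slots $s_c$ accordingly before optimizing over $c$. The paper likewise defers the optimality of the symmetric placement to an omitted analytic-geometry argument, so your identification of that step as the ``hard part'' matches the paper's own treatment exactly.
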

\begin{proof}
Suppose that there are $c$ paths constructed between $src_1$ and $dest_1$. To find the maximal flow that can be achieved by the constructed paths, the key is to seek out the heaviest interference area that involves the maximum number of nodes in transmission. Obviously, the area surrounding $src_1$ or $dest_1$ are inevitable to be the heaviest interference area with the highest density of nodes placed in the construction of paths, i.e. some nodes have to be placed around $src_1$ to receive the transmission from $src_1$ in figure \ref{fig3}, and thus, these nodes could not be placed far enough to avoid interference. To find the maximal flow can be achieved in the heaviest interference area surrounding $src_1$ or $dest_1$, it is necessary to find the heaviest interference node on a constructed path, which is the node that interferes with the maximum number of nodes on other paths. Let $S_{int}(n)$ denote the interference node set for a node $n$. Take the heaviest interference area surrounding $src_1$ for example. The heaviest interference node $n_h$ has
\begin{equation} \label{n1}
n_h = \arg \max_{Dist(src_1,n)<R} |S_{int}(n)|
\end{equation}

\begin{figure}[!htb]
\centering
\includegraphics[width=2.9in]{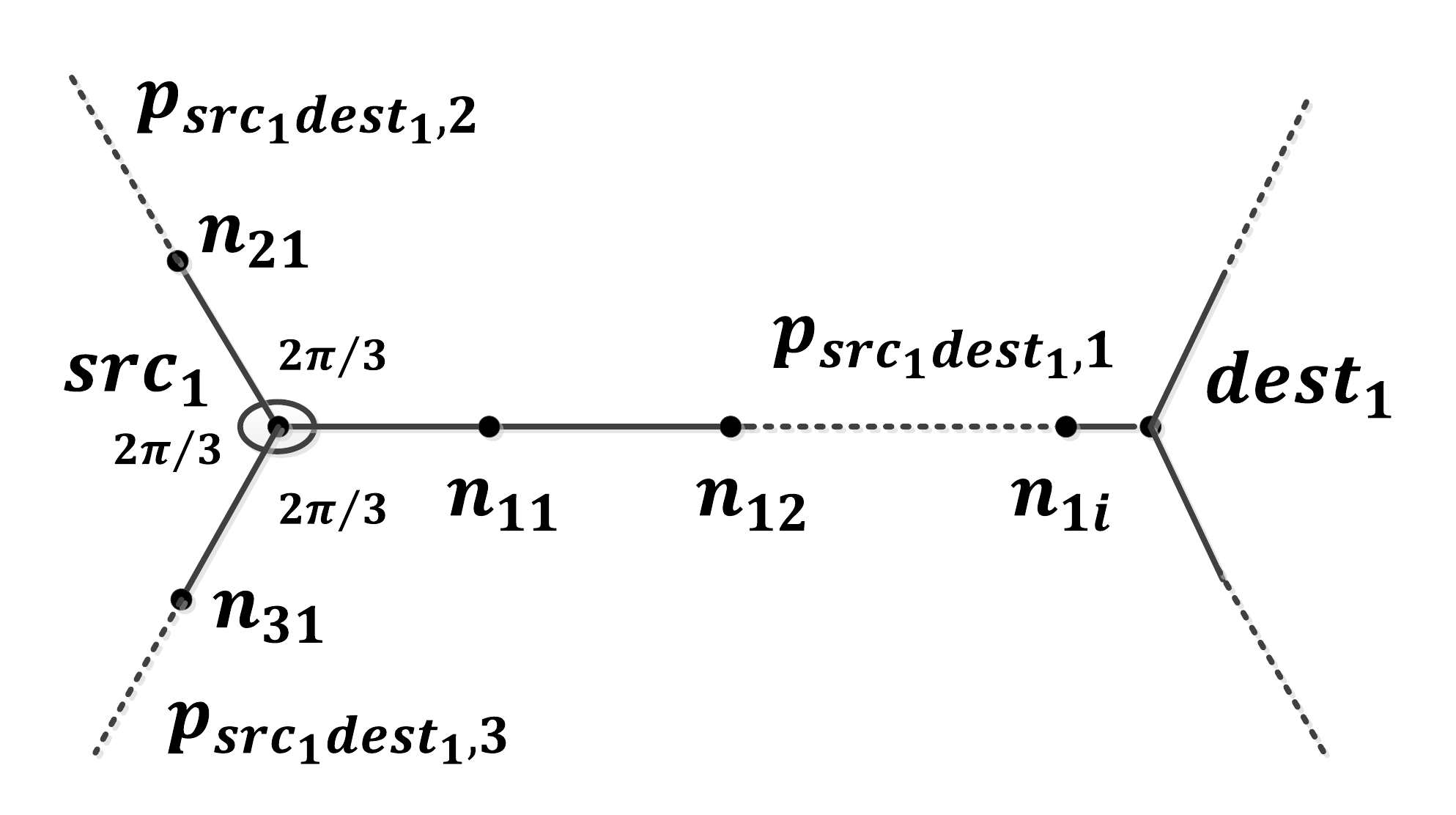}\\
\caption{\textrm{$3$ constructed paths of $(src_1,dest_1)$}} \label{fig3}
\end{figure}

Next, we firstly try to minimize the interferes when constructing paths so that the heaviest interference node will interfere with the minimum number of nodes. With the path construction process, the expression (\ref{n1}) can be rewritten as
\begin{equation} \label{n2}
n_h = \arg \min \{ \max_{Dist(src_1,n)<R} |S_{int}(n)| \; | \; path \; construction \}
\end{equation}
It can be proved that if these $c$ paths are constructed with equal angle-interval of $\frac{2\pi}{c}$ from $src_1$ and to $dest_1$ ($\angle n_{11}src_1n_{21}=\angle n_{21}src_1n_{31}=\angle n_{31}src_1n_{11}=\frac{2\pi}{3}$ in figure \ref{fig3}), the interference can be reduced to the minimal between paths. It is not hard to finish this proof with the basic knowledge of analytic geometry. As space is limited, we ignore the subordinate details here.

After $c$ paths with equal angle-interval of $\frac{2\pi}{c}$ have been constructed, the heaviest interference node $n_h$ could be found by expression \ref{n1}. More specific, we randomly select a constructed path and compare all the nodes $n$ having $Dist(src_1,n)<R$ to find the heaviest interference node that determines the maximal flow can be achieved. Without loss of generality, we set the selected path as $p_{src_1dest_1,1}$ and the selected comparison nodes with the coordinates of $(r,0),(2r,0),...,(jr,0)$. For a node $(qr,0)$, $q=1,2,...,j$, the interference length $x_m$ with another path $p_{src_1dest_1,m}$ can be calculated with the cosine formula
\begin{equation} \label{n3}
R^2 = q^2r^2 + {x_m}^2 - 2qrx_m \cos{\frac{2 \pi (m-1)}{c}}
\end{equation}
Solve this equation, we can get
\begin{equation} \label{n4}
x_m = qr\cos{\frac{2 \pi (m-1)}{c}} + \sqrt{q^2r^2 \cos^2{\frac{2 \pi (m-1)}{c}} - q^2r^2 + R^2}
\end{equation}
Since $x_m$ could be negative value, we can calculate the number of interference nodes on path $p_{src_1dest_1,m}$ as $\max\{floor(x_m/r), 0\}$, in which $floor()$ rounds down the fractions. Thus, $s_m$ new time slots should be assigned to $p_{src_1dest_1,m}$
\begin{equation} \label{n5}
s_m = \max\{\; floor(\frac{x_m}{r}), \; 0 \; \} + 1
\end{equation}
Add up all the new assigned time slots
\begin{equation} \label{n6}
s(qr,0) = q+1+\sum_{m=2}^c s_m
\end{equation}
Find the maximal counts of time slots assigned
\begin{equation} \label{n7}
s_c = \max_{q=1,2,...,j} s(qr,0)
\end{equation}
The node that introduces the maximal counts of time slots $s_c$ is the heaviest interference node. We get the maximal counts of time slots $s_c$ for $c$ constructed paths, and the flow achieved can be denoted as
\begin{equation} \label{n8}
f_c = \frac{cf}{s_c}
\end{equation}
Let $c=1,2,...,C$. We can obtain the maximal flow $F_C$ can be achieved between $src_1$ and $dest_1$ by\footnote{The value of $C$ is determined by $R$. When the number of paths increase to a certain value, the flow can be achieved will decrease then with too many nodes interfering in transmission.}
\begin{equation} \label{n9}
F_C = \max_{c=1,2,...,C} f_c
\end{equation}
The proof is completed by combining the expression (\ref{n4})-(\ref{n9}).
\end{proof}

\begin{algorithm} \label{alg1}
\caption{Construct Multiple Paths for One Source Destination Pair (MP1)}
\KwIn{Source destination pair $(src_1,dest_1)$, Flow $f$, Interference range $R$, Maximal path count $C$}
\KwOut{The maximal flow $F_C$ and its constructed paths $p_{src_1dest_1}$}
$F_C=0$; \\
\For{$c=1,2,...,C$}{
    \For{$m=1,2,...,c$}{
        Place nodes as the beginning of path $p_{src_1dest_1,m}$ with $\frac{2\pi(m-1)}{c}$ degrees to the direction of $\overrightarrow{src_1dest_1}$ from $src_1$ until a placed node on $p_{src_1dest_1,m}$ does not interfere other nodes in other already constructed paths; \\
        Place nodes as the end of path $p_{src_1dest_1,m}$ with $\pi - \frac{2\pi(m-1)}{c}$ degrees to the direction of $\overrightarrow{src_1dest_1}$ to $dest_1$ until a placed node on $p_{src_1dest_1,m}$ does not interfere other nodes in other already constructed paths; \\
        Connect the nodes at the beginning and the end of $p_{src_1dest_1,m}$ and let the nodes at the connection path do not interfere the transmission of other already constructed paths; \\
    }
    $s_c = \max \limits_{q=1,2,...,j} \{ \; q+1+\sum_{m=2}^c \{ \; \max \{ \; floor(\frac{x_m}{r}), \; 0 \} + 1 \; \} \; \}$; \\
    $x_m =\sqrt{q^2r^2 \cos^2{\frac{2 \pi (m-1)}{c}} - q^2r^2 + R^2} + qr\cos{\frac{2 \pi (m-1)}{c}}$; \\
    \If{$\max_{c=1,2,...,C} \frac{cf}{s_c}>F_C$}{
        $F_C=\max_{c=1,2,...,C} \frac{cf}{s_c}$; \\
        Record the constructed paths $p_{src_1dest_1,m}$, $m=1,2,...,c$; \\
    }
}
\Return $F_C$ and $p_{src_1dest_1}$; \\
\end{algorithm}

We have proved that the flow $f_{src_1dest_1}$ between $src_1$ and $dest_1$ can reach $F_C$ at most with multiple paths used. The One Source Destination Pair with Multiple Paths (MP1) algorithm summarizes the process to find $F_C$ (Algorithm \ref{alg1}). Paths are constructed with equal angle-interval at their beginning and end part, and the connection paths between the beginning and end part are constructed to avoid interference mutually. Then, the maximal flow $F_C$ can be calculated by expression (\ref{h1}). MP1 algorithm executes with time complexity of $O(C^2)$ to construct paths.

\subsection{Calculating The Maximal Flow for Multiple Routing Demands}

Suppose that there are $m$ routing demands $(src_1,dest_1)$, $(src_2,dest_2)$,...,$(src_m,dest_m)$. As proved, the flow of a single demand $(src_1,dest_1)$ can reach the flow of $F_C$ by constructing multiple paths. However, in $m$ routing demands case, not all these demands can reach $F_C$. We have the following theorem:
\begin{thm} \label{t3}
Not all $m$ routing demands of node pair $(src_1,dest_1)$, $(src_2,dest_2)$,...,$(src_m,dest_m)$ can definitely reach the flow of $F_C$ by constructing paths.
\end{thm}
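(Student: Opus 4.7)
The statement is an existential one ("not all can definitely reach $F_C$"), so the natural strategy is to exhibit a concrete counterexample configuration and argue that in that configuration at least one of the routing demands provably falls short of $F_C$. The plan is therefore to construct a worst case arrangement of the $m$ source destination pairs in which the inter-demand interference is unavoidable, and then bound the achievable flow from above by counting forced time slot collisions.

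First I would place the $m$ pairs so that their source regions lie mutually within the interference range $R$ of each other: for instance, put $src_1,src_2,\ldots,src_m$ inside a disk of radius much smaller than $R$, and let the $dest_q$'s be far away in sufficiently different directions so that the demands are genuinely distinct. By Theorem~\ref{thm_Fc}, achieving $F_C$ for any single pair $(src_q,dest_q)$ requires constructing some number $c_q$ of paths whose first hops already occupy a cluster of nodes within distance $R$ of $src_q$, and the argument there crucially used the fact that those first hops could be spread at equal angular intervals around $src_q$ so that the heaviest interference node on one path interferes with at most a controlled number of nodes on the other paths of the same demand. The key observation to exploit is that when a second source $src_{q'}$ is itself within $R$ of $src_q$, the first-hop nodes of demand $q'$ must be counted into $S_{int}(n)$ for some node $n$ near $src_q$, which strictly enlarges the interference set beyond the value assumed in the derivation of~(\ref{h1}).

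Next I would translate that enlargement into a time slot lower bound. For each demand, the links emanating from $src_q$ need their own time slot, but because the first hop nodes of all $m$ demands mutually interfere, the total number of distinct slots required in the combined schedule is at least $\sum_q (s_{c_q} + \Delta_q)$ with $\Delta_q>0$, rather than $s_{c_q}$ as in the single demand analysis. Dividing the $c_q f$ flow target of any one demand by the now larger slot count yields a strictly smaller achievable rate than $F_C$ for at least one $q$. By a pigeonhole argument over the $m$ demands, one of them must therefore fall below $F_C$, which is exactly what the theorem claims.

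The main obstacle I expect is making the inter-demand interference bookkeeping rigorous: Theorem~\ref{thm_Fc} analyzed the heaviest interference node only among nodes belonging to the \emph{same} routing demand's paths, so I need to argue carefully that the minimization in~(\ref{n2}) over path constructions cannot absorb the extra interference contributed by foreign first-hop nodes, i.e.\ that no re-routing of either demand can circumvent the shared $R$-disk. Once that geometric fact is nailed down (which follows from the fact that every path out of $src_q$ must begin with a hop of length at most $r<R$, so its first node is forced inside the problematic disk), the counting argument above closes the proof and establishes the theorem.
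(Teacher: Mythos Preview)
Your approach is correct in spirit: the theorem is existential, so a single counterexample suffices, and you correctly aim to build one. But you are working much harder than the paper does. The paper's ``proof'' is a two-line example with $m=2$: place $src_2$ so that it lies within interference range of both $src_1$ and $dest_1$. Then whenever $src_2$ transmits, no link incident to $src_1$ or $dest_1$ can be active, so the first demand cannot keep enough slots to reach $F_C$. That is the entire argument---no slot-counting bookkeeping, no pigeonhole over $m$ demands, no appeal to the internal structure of~(\ref{h1}).

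Your version---clustering all $m$ sources in a small disk and arguing that foreign first-hop nodes inflate $S_{int}(n)$ beyond the value used in Theorem~\ref{thm_Fc}---would also establish the result, and it has the merit of showing the phenomenon for arbitrary $m$ rather than just $m=2$. However, the ``main obstacle'' you flag is real: the expression $\sum_q(s_{c_q}+\Delta_q)$ is not obviously the correct lower bound on the number of shared slots (slots from different demands can be reused if the corresponding links do not interfere), and turning your sketch into a clean inequality would require more care than you indicate. Since one explicit two-demand configuration already proves the claim, that extra effort buys nothing here; the simplest counterexample is the right proof.
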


For example, there are two routing demands $(src_1,dest_1)$ and $(src_2,dest_2)$ in figure \ref{fig4}. Both $src_1$ and $dest_1$ interfere with $src_2$ in transmission. When $src_2$ does not transmit, the flow of $(src_1,dest_1)$ can reach $F_c$. When $src_2$ transmits with interference, the link $src_1dest_1$ cannot be activated all the time, and thus, cannot reach the flow of $F_C$.

\begin{figure}[!htb]
\centering
\includegraphics[width=1.9in]{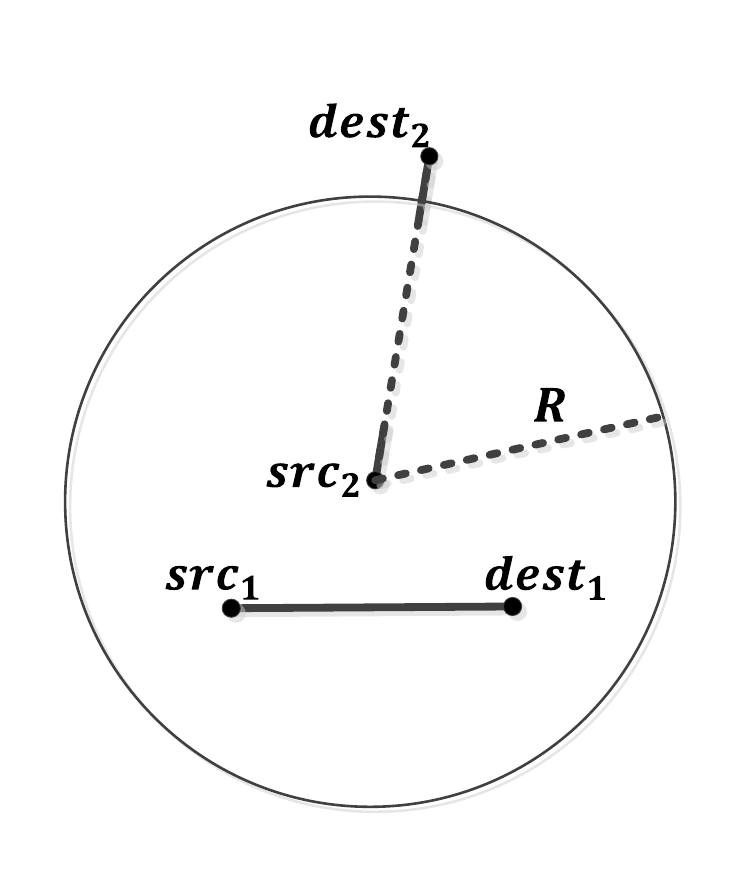}\\
\caption{\textrm{Two routing demands $(src_1,dest_1)$ and $(src_2,dest_2)$}} \label{fig4}
\end{figure}

\begin{figure}[!htb]
\centering
\includegraphics[width=2.8in]{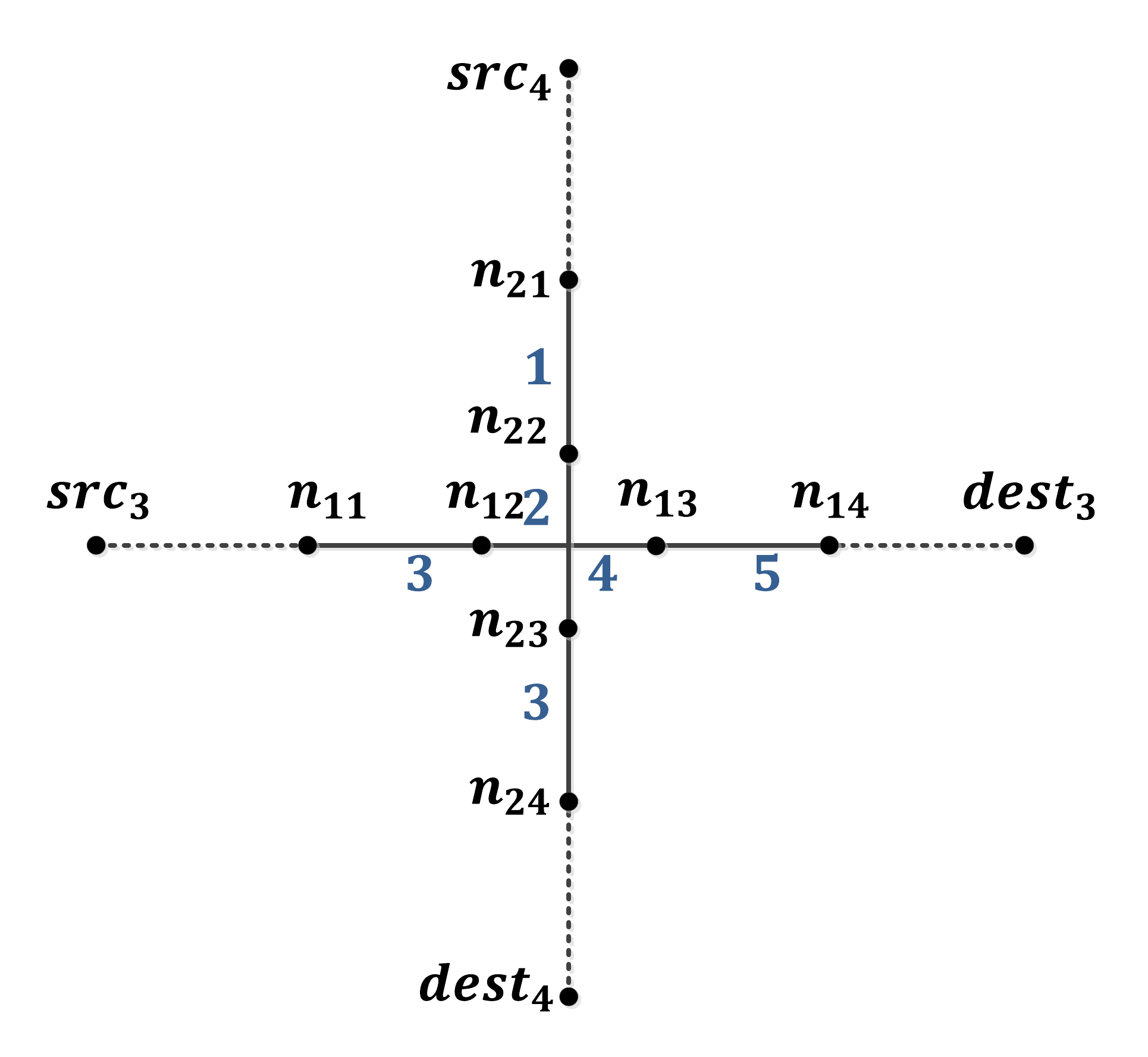}\\
\caption{\textrm{Two routing demands $(src_3,dest_3)$ and $(src_4,dest_4)$}} \label{fig5}
\end{figure}

However, even when the transmission of a routing demand is under the interference of constructed paths of other demands, the flow of this routing demand can still reach $F_C$ by assigning time slots in some cases. In figure \ref{fig5}, there are two routing demands $(src_3,dest_3)$ and $(src_4,dest_4)$. The path $p_{src_3dest_3,1}$ is assigned with time slots set of $\{1,2,3,4,5\}$ to its links, and thus, can reach flow of $\frac{f}{5}$. The path $p_{src_4dest_4,2}$ is assigned with time slots $\{1,2,3,4\}$ and can reach flow of $\frac{  }{4}$. Both node $n_{12}$ and $n_{13}$ on path $p_{src_3dest_3,1}$ interfere with $n_{22}$ and $n_{23}$ in transmission. By assigning time slots of $4$, $5$, $2$ and $3$ to links $n_{12}n_{13}$, $n_{13}n_{14}$, $n_{22}n_{23}$ and $n_{23}n_{24}$ respectively, both path $p_{src_3dest_3,1}$ and $p_{src_4dest_4,2}$ can still reach the flow of $\frac{f}{5}$ and $\frac{f}{4}$ without adding new slot to existing time slots set to avoid interference.

\begin{algorithm} \label{alg2}
\caption{Construct Multiple Paths for Multiple Source Destination Pairs (MPM)}
\KwIn{Source destination pairs $(src_q,dest_q)$, $q=1,2,...,m$, Flow $f$, Interference range $R$, Maximal path count $C$}
\KwOut{The flow $F^{'}_C(src_qdest_q)$ and path $p_{src_qdest_q}$, $q=1,2,...,m$}

\For{$q=1,2,...,m$}{
    Get the pair $(src_q,dest_q)$; \\
    $[F_C(src_qdest_q), p_{src_qdest_q}]=MP1((src_q,dest_q), f, R, C)$; \\
}
\For{$q=1,2,...,m$}{
    Get the pair $(src_q,dest_q)$; \\
    Set $F^{'}_C(src_qdest_q)=0$; \\
    \For{Each constructed path $p$ in $p_{src_qdest_q}$}{
        Set $n(p)$ as a randomly selected node on path $p$; \\
        Set $S_{int}(p)=\emptyset$; \\
        \For{Each placed node $n$ in $p$}{
            Find the nodes that interfere the transmission of node $n$, and record these nodes into $S_{int}(n)$ (Count the source and destination nodes with the times of the number of constructed paths that connect them in $S_{int}(n)$\footnote{For example, if there are $3$ paths constructed from $src_1$ to $dest_1$, and $src_1$ or $dest_1$ interferes the transmission of node $n$, then $src_1$ or $dest_1$ is counted for $3$ times in $S_{int}(n)$.}) \\
            \If{$|S_{int}(n)|>|S_{int}(p)|$}{
                Set $n(p)=n$; \\
                Set $S_{int}(p)=S_{int}(n)$; \\
            }
        }
        Assign time slots ${1,2,...,|S_{int}(p)|}$ to the links transmitted by the elements in $S_{int}(p)$ if the link has not been assigned with time slot, and other links in path $p$ could reuse these time slots without interference in transmission; \\
        Set $F^{'}_C(src_qdest_q)+=\frac{1}{S_{int}(p)}$; \\
    }
}
\Return The flow $F^{'}_C(src_qdest_q)$ and path $p_{src_qdest_q}$, $q=1,2,...,m$; \\
\end{algorithm}

As a summary, we give a theorem:
\begin{thm}
The flow of a routing demand can reach $F_C$ if time slots can be assigned to avoid interference of other demands without adding new time slots.
\end{thm}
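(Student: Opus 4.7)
The plan is to reuse the schedule produced in Theorem~\ref{thm_Fc} and observe that the hypothesis of the current theorem precisely says this schedule survives, up to a relabelling of slots, in the multi-demand setting. I would first restate what $F_C$ means operationally: for a chosen number of paths $c$, Theorem~\ref{thm_Fc} produces a cyclic TDMA schedule of period $s_c$ in which every link of every one of the $c$ constructed paths is activated in exactly one slot per period and no two interfering links share a slot. Consequently, every link operates at rate $f/s_c$, each path delivers $f/s_c$ end-to-end, and the $c$ paths in aggregate deliver $F_C = cf/s_c$.

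Next, I would invoke the hypothesis. It asserts that the $s_c$ slots already assigned to the $c$ paths of this demand can be permuted/relabelled so that, in addition to avoiding the intra-demand conflicts already handled by Theorem~\ref{thm_Fc}, they also avoid the inter-demand conflicts caused by links of the other $m-1$ demands --- crucially, without enlarging the period beyond $s_c$. Under this hypothesis, the schedule is still a valid TDMA schedule of period $s_c$, each link is still active in one slot out of $s_c$, and therefore each link still carries $f/s_c$. Summing over the $c$ paths of the demand gives throughput exactly $cf/s_c = F_C$, which is what we wanted.

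The main obstacle is essentially linguistic rather than technical: making precise what ``without adding new time slots'' means and confirming it is equivalent to the existence of a proper edge-coloring, with $s_c$ colors, of the combined conflict graph that induces the already-constructed intra-demand coloring on the $c$ paths. Once this equivalence is spelled out, the feasibility part is handed to us by hypothesis; all that remains is the throughput computation, which is a direct application of the activation-fraction argument above. I would therefore spend most of the proof text pinning down the schedule-preservation claim and only a sentence or two on the arithmetic that turns per-link rate $f/s_c$ into per-demand flow $F_C$.
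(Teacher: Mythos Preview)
Your proposal is sound, but you should know that the paper does not actually supply a proof for this theorem. It is stated \emph{``as a summary''} of the preceding discussion, in particular the worked example of Figure~\ref{fig5}, where two interfering paths are shown to retain their respective flows by a judicious reassignment of the existing slot labels. No \texttt{proof} environment follows the statement.

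Your write-up therefore goes beyond the paper: you abstract the example into a general argument by (i) recalling that the MP1 construction of Theorem~\ref{thm_Fc} yields a periodic schedule of length $s_c$ with one activation per link per period, and (ii) observing that the hypothesis is exactly the assertion that this schedule can be re-colored to respect the enlarged (inter-demand) conflict graph without increasing the period. The throughput computation $c\cdot f/s_c = F_C$ then carries over unchanged. This is the right formalization, and the only point to make explicit is that the $c$ in play is the maximizer from equation~(\ref{h1}), so that $cf/s_c$ is indeed $F_C$ rather than merely some $f_c$. The paper's example-driven justification illustrates the same phenomenon but does not articulate the period-preservation / rate-preservation step you spell out; your version is the more complete of the two.
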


Since not every routing demand can reach its maximal flow of $F_C$, we consider $m$ routing demands as a whole, and try to answer the problem that how to maximize the sum of the flow of all these $m$ routing demands. The problem is formalized as following objective function:
\begin{equation} \label{e4}
    \begin{array}{rcl}
        \max {\sum_{q=1}^{m} f_{src_qdest_q}}\\
        \\
        s.t. \quad f_{src_qdest_q}>0
    \end{array}
\end{equation}
However, it is NP-hard to solve this objective function.
\begin{thm} \label{t4}
Construct paths to maximize the sum of the flow of $m$ routing demands is NP-hard.
\end{thm}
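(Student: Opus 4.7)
The plan is to establish NP-hardness by a polynomial-time reduction from a known NP-complete problem. First I would formalise the decision version: given the $m$ demand pairs $(src_q,dest_q)$, the transmission range $r$, the interference range $R$, the per-slot rate $f$, and a threshold $K$, does there exist a placement of relays together with a feasible time-slot assignment such that $\sum_{q=1}^{m} f_{src_qdest_q} \geq K$? Membership in NP is immediate, since any candidate solution (paths plus slot labelling) can be checked for Protocol-Interference feasibility and its total flow computed in polynomial time.

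For the hardness direction, my reduction of choice is from Maximum Independent Set on unit disk graphs, which is NP-hard and fits the geometric Protocol Interference Model used in this paper. Given a unit disk graph $G=(V,E)$, I would create one demand pair $(src_v, dest_v)$ per vertex $v \in V$ and position them in the plane so that (i) for every edge $(u,v)\in E$, the mandatory interference neighbourhood around $src_u$ and $dest_u$ (the same zone exploited in the proof of Theorem \ref{thm_Fc}, where some relay must be placed within distance $R$ of each endpoint) forces at least one link of any path from $src_u$ to $dest_u$ to lie within distance $R$ of some link on any path from $src_v$ to $dest_v$, and (ii) for every non-edge $(u,v)\notin E$, the two demands admit completely non-interfering routings. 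Under (i), by the argument following Theorem \ref{t3}, two interfering demands cannot both retain their in-isolation cycle length; at least one must absorb extra slots, strictly dropping its flow below $F_C$ by a geometrically controlled amount $\delta>0$. Setting $K = |S|F_C + (m-|S|)(F_C-\delta)$ for a target independent-set size $|S|$, the maximum total flow meets or exceeds $K$ if and only if $G$ contains an independent set of size $|S|$.

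The main obstacle is the geometric gadget design. Because the plane affords many possible routings of each demand, I must prevent a demand from dodging an intended interference by taking a long, sinuous detour. My approach is to embed each $(src_v,dest_v)$ pair in a narrow corridor defined by a ring of "obstacle" source-destination pairs with negligibly short distance requirements, so that every feasible path from $src_v$ to $dest_v$ is funnelled through a controlled zone where its conflicts with neighbouring demands are unavoidable, and the cost of any detour (measured in the extra time slots it induces via Theorem \ref{t1}) strictly exceeds any savings in avoided interference. Verifying that the construction is polynomial in $|V|+|E|$ is routine; the correctness of the gadget, i.e.\ that the interference graph of the constructed instance is exactly $G$ and that no routing strategy breaks this correspondence, is the delicate part. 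Once this is in place, the equivalence between large total flow and large independent set yields the NP-hardness claimed in Theorem \ref{t4}.
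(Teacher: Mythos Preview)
Your route differs substantially from the paper's. The paper gives a two-line argument that invokes Theorem~1 of Jain, Padhye, Padmanabhan and Qiu (\emph{Mobicom} 2003), which already shows that maximising total multi-commodity flow under the Protocol Interference Model on a \emph{fixed} wireless topology is NP-hard (and hard to approximate); the paper then observes that free node placement amounts to selecting relays from an unbounded candidate set, so the fixed-topology problem sits inside the placement problem. Your plan is instead a self-contained reduction from Maximum Independent Set on unit disk graphs, with bespoke geometric gadgets. The paper's approach buys brevity and reuse of existing hardness machinery; yours would be independent of external results but requires building and verifying an intricate construction.

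As a proposal, however, yours has gaps that are not merely routine. First, NP membership is not immediate in a continuous-placement model: a certificate contains real relay coordinates, and you would need to argue that polynomially bounded rational coordinates suffice. Second, the detour-penalty argument underlying your corridor gadget is wrong as stated: by Theorem~\ref{t1} the per-path flow $F_1=f/(j{+}1)$ depends only on $R/r$, not on path length, so a long sinuous detour costs extra relays but \emph{no} time slots and hence no flow; nothing in the objective~(\ref{e4}) penalises relay count, so a demand can always route around your obstacle ring without loss. Third, the threshold $K=|S|F_C+(m-|S|)(F_C-\delta)$ does not cleanly separate yes- and no-instances: in the no case you must exclude configurations where many demands each lose strictly less than $\delta$, and in the yes case you must guarantee the $m-|S|$ remaining demands still achieve at least $F_C-\delta$. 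The paper avoids all of this by delegating to the fixed-topology result.
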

\begin{proof}
This problem can be reduced to the NP-hard problem of Theorem $1$ in \cite{Mobi03_Interference}. The NP-hard problem in \cite{Mobi03_Interference} is to find the maximal sum of the flow for a group of source and destination nodes on a given network with deployed nodes. Our problem is difference in that nodes are placed to construct paths to maximize the sum of the flow for a group of source and destination nodes. The node placement can be seen as selecting a set of nodes from infinite candidate nodes from the plane. By considering these candidate nodes as deployed nodes in the network, our problem can be reduced to the NP-hard problem of Theorem $1$ in \cite{Mobi03_Interference}.
\end{proof}

It is NP-hard to maximize the sum of the flow of $m$ routing demands by constructing paths. Also, it has been proved in \cite{Mobi03_Interference} that the maximum sum of the flow is NP-hard to be approximated. In other words, with the reduction of our problem to this problem, we know that it is NP-hard to find a solution to construct paths that guarantee to approximate the maximal flow.

Although it is NP-hard to approximate the maximal flow, the constructed paths should try to avoid interference between each other. The interference between paths depends on the position of the source and destination nodes. Reconstruct the paths if interference exists between them can make little effect because reconstructed paths usually introduce new interference, especially when the reconstructed path goes through the source or destination nodes, or intersects with other paths. So we design the MPM algorithm that does not try to reconstruct paths to avoid interference, but try to find the heaviest interference node and assign time slots based on its interference set. MPM algorithm (Algorithm \ref{alg2}) find the heaviest interference node $n(p)$ for path $p$, and record its interference nodes to set $S_{int}(p)$. The flow $F^{'}_C(p)$ achieved by path $p$ is $\frac{1}{|S_{int}(p)|}$. The flow $F^{'}_C(src_qdest_q)$ between $src_q$ and $dest_q$ can be calculated by summing up the flow of constructed paths between them. MPM algorithm processes each path once, and returns the flow $F^{'}_C(src_qdest_q)$ and constructed path $p_{src_qdest_q}$, $q=1,2,...,m$. The achieved sum of flow $F^{'}_C$ for all source and destination pairs is $\sum_{q=1}^m F^{'}_C(src_qdest_q)$.

\subsection{Finding The Minimal Number of Nodes for Routing Demands with Flow Requirement}

\begin{algorithm} \label{alg3}
\caption{Merge}
\KwIn{Paths $p_{src_qdest_q}$ with flow requirement $F_A(src_qdest_q)$, $q=1,2,...,m$, Flow $F^{'}_C$}
\KwOut{Merged paths $p^{'}_{src_qdest_q}$, $q=1,2,...,m$}
\For{$q=1,2,...,m$}{
    Let $c_q$ denote the number of paths for $(src_q, dest_q)$; \\
    \While{Delete the longest path from $p_{src_qdest_q}$, still having the flow achieved $F^{'}_{c_q}(src_qdest_q)>F_A(src_qdest_q)$}{
        Delete the longest path from $p_{src_qdest_q}$; \\
        Set $c_q=c_q-1$; \\
    }
    Set $dF(src_qdest_q)=F^{'}_{c_q}(src_qdest_q)-F_A(src_qdest_q)$; \\
}
Sort the $(src_q, dest_q)$ by the value of $dF(src_qdest_q)$ in descending order, $q=1,2,...,m$, and record the result in array $D$; \\
\For{$q=1,2,...,m$}{
    \For{$u=q+1,q+2,...,m$}{
        Get the pair $(src_q, dest_q)$ and $(src_u, dest_u)$ from array $D$; \\
        Equal-lengthly sample a set of points in all paths of $p_{src_qdest_q}$ and $p_{src_udest_u}$; \\
        Calculate the distance from all paths of $p_{src_qdest_q}$ to all paths of $p_{src_udest_u}$ by averaging the distance between sample points, and record the results into the $c_q\times c_u$ matrix $M$; \\
        \While{$1$}{
            Select the minimal value of distance from matrix $M$, denoted by $M(a_q, a_u)$ ($M(a_q, a_u)\neq \infty$); \\
            Set the $a_q$ row $M(a_q,:)=\infty$, and the $a_u$ column $M(:,a_u)=\infty$; \\
            \eIf{$p_{src_qdest_q,a_q}$ and $p_{src_qdest_q,a_q}$ is mergable}{
                Merge the path $p_{src_qdest_q,a_q}$ and $p_{src_qdest_q,a_q}$; \\
            }{
                break;
            }
        }
        Record the newly constructed path $p^{'}_{src_qdest_q}$ and $p^{'}_{src_udest_u}$; \\
    }
}
\Return paths $p^{'}_{src_qdest_q}$, $q=1,2,...,m$;
\end{algorithm}

$m$ routing demands with flow requirement of $F_C$ has been discussed in last section. Using MPM algorithm, $m$ routing demands can achieve flow of $F^{'}_C$. Actually, the flow requirement of routing demand can be arbitrary in practical. In this section, we focus on the case of $m$ routing demands $(src_1,dest_1)$, $(src_2,dest_2)$,...,$(src_m,dest_m)$ with flow requirement of  $F_A(src_qdest_q)$, in which $F_A(src_qdest_q)\leq F^{'}_C(src_qdest_q)$ for $q=1,2,...,m$. In this case, the capacity of some paths may exceed the flow requirement of some routing demands. This will lower link utilization rate of these paths. It is a waste of resource. For this reason, some paths can be merged to increase the link utilization rate and decrease the total number of nodes needed to be placed. We have the following objective function:
\begin{equation} \label{e5}
    \begin{array}{c}
        \min {\sum_{q=1}^{m} \sum_{a=1}^{c_q} Length(p_{src_qdest_q,a})}  \\
        \\
        s.t. \quad \sum_{a=1}^{c_q} f_{src_qdest_q,a} \geq F_A(src_qdest_q)
    \end{array}
\end{equation}

As implied in the objective function, two path, one for routing demand $(src_q,dest_q)$ and the other one for $(src_{q_k},dest_{q_k})$, can be merged if the length of the merged path is reduced than the total length of these two paths, and the flow requirement, $F_A(src_qdest_q)$ and $F_A(src_{q_k}dest_{q_k})$, can be met after these two path have been merged. I.e., in figure \ref{fig6}, let $Length(d_{k1}d_{k2})$ denote the length of path $p_{src_{q_k}dest_{q_k},a_k}$ between two points, $d_{k1}$ and $d_{k2}$. We have $Dist(d_{k1},p_{src_qdest_q,a}) + Dist(d_{k1},p_{src_qdest_q,a}) < Length(d_{k1}d_{k2})$, which makes the merged path shorter than the total length of the original two paths and the flow requirement of $F_A(src_qdest_q)$ and $F_A(src_{q_k}dest_{q_k})$ can still be met. The path $p_{src_qdest_q,a}$ and $p_{src_{q_k}dest_{q_k},a_k}$ are mergable. We use the mergable element $P$ to denote mergable paths, such as $P=(p_{src_qdest_q,a}, p_{src_{q_k}dest_{q_k},a_k})$. A mergable set $S_{mer}$ composes of mergable elements, having $p_1\neq p_2$ for $\forall p_1\in P_1$, $\forall p_2\in P_2$, and $\forall P_1, P_2 \in S_{mer}$. Then, a maximum mergable set can be described as there are no other mergable paths for $(src_q,dest_q)$, $q=1,2,...,m$ after applying all the merging for mergable paths in the set.

\begin{figure} \centering
\subfigure[Before merging] { \label{fig6:a}
\raggedleft
\includegraphics[width=2.4in]{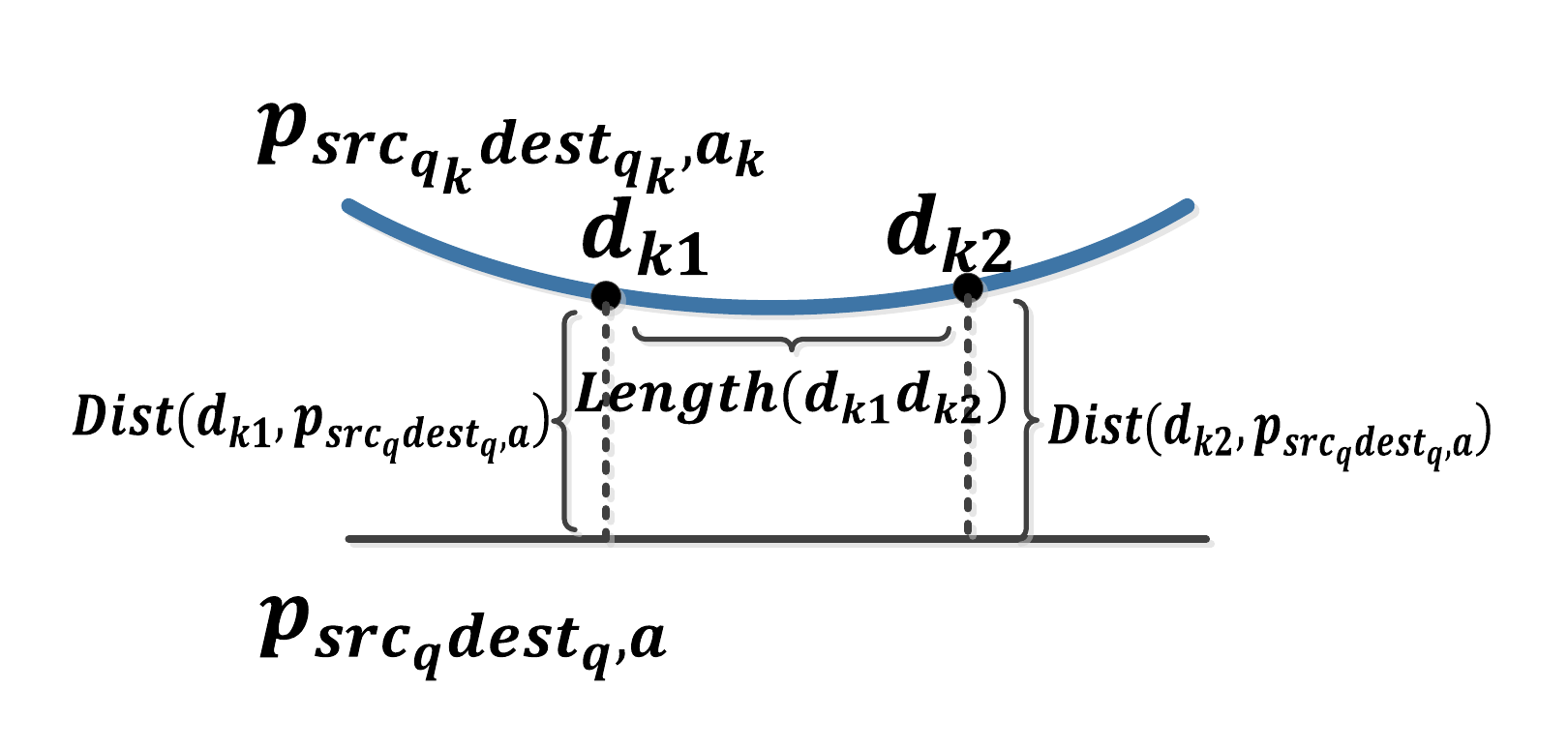}
}
\subfigure[After merging] { \label{fig6:b}
\centering
\includegraphics[width=1.6in]{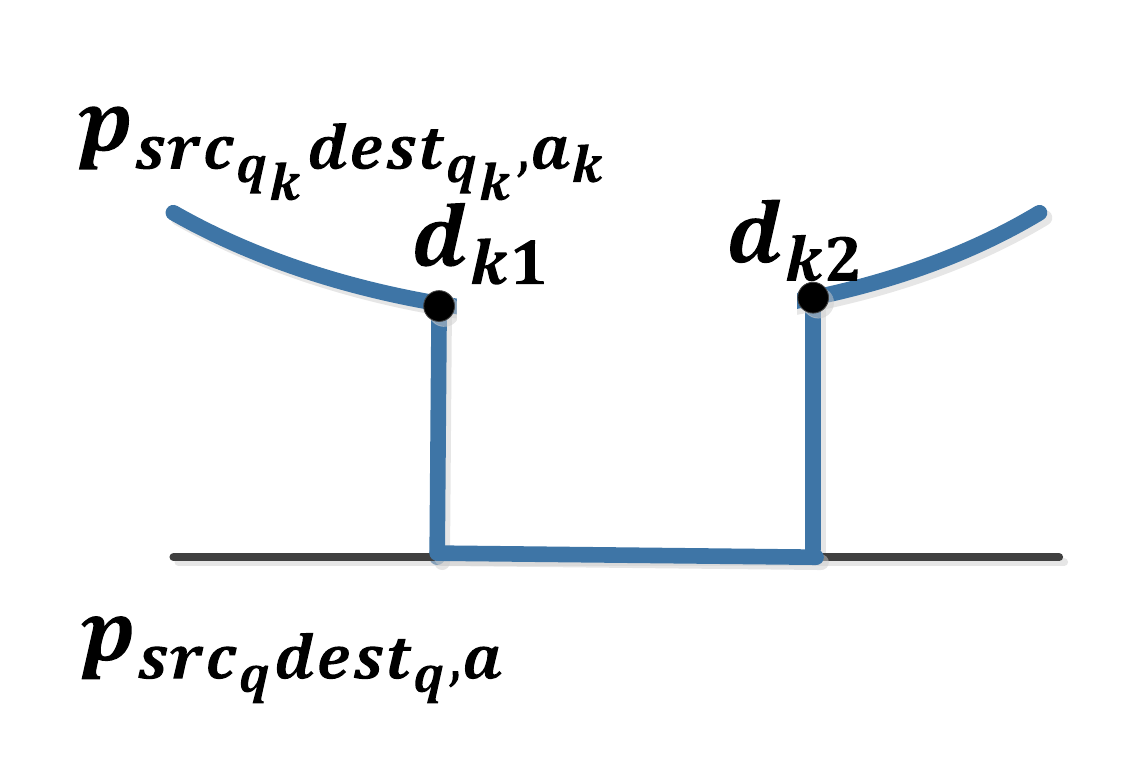}
}
\caption{Merge path}
\label{fig6}
\end{figure}

To solve the objective function that minimize the total path length while still meet the flow demands, all the maximum mergable set should be found and compared so that we can find the set minimize the total path length. However, it is NP-hard to solve this objective function.

\begin{thm}
Minimize the total length of all paths for $m$ routing demands with flow requirement restriction by merging paths is NP-hard.
\end{thm}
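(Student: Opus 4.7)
The plan is to establish NP-hardness by a polynomial-time reduction from a known NP-hard geometric optimization problem. The most natural candidate is the \emph{Euclidean Steiner Forest} problem --- given $k$ pairs of terminals in the plane, find a minimum-total-length planar network in which each pair lies in the same connected component --- which is NP-hard even in its planar form. The objective (\ref{e5}) has the same flavor: construct minimum-total-length geometric structures linking designated source-destination pairs, where paths may share segments via merging.

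As a preliminary step, I would check that the decision version (``can merging reduce the total path length to at most $L$ while still meeting every $F_A(src_q dest_q)$?'') lies in NP. Given a candidate merged-path collection as a certificate, flow satisfaction can be verified in polynomial time using the time-slot reasoning of the MPM algorithm, and the total length can be summed directly.

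Next, I would build the reduction. From a Steiner Forest instance with terminal pairs $(u_i,v_i)$ and budget $L$, instantiate each $(u_i,v_i)$ as a routing demand with flow requirement $F_A(src_i dest_i)$ small enough that a single, unshared path is always sufficient, and scale the interference range $R$ to be negligible compared to inter-terminal distances so that wireless interference never binds. Seed each demand with a sufficiently diverse family of initial candidate paths so that every Steiner topology of interest is reachable by a sequence of the pairwise merges defined in the paper. Then (i) any merging solution with total length $\leq L$ yields, after removing redundant segments, a Steiner Forest of length $\leq L$, and (ii) any Steiner Forest of length $L$ produces, by tracing a simple path inside the forest from each $u_i$ to $v_i$, a collection of paths of total length $L$ realizable as the merged output of the seeded candidates.

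The main obstacle I anticipate is reconciling the paper's local, sample-distance-based definition of mergability with the global structure of an optimal Steiner Forest: the reduction must ensure that the topologically relevant merges are actually admissible under the stated rule. Secondary effort will be required to calibrate the flow and interference parameters so that all computational difficulty sits in the geometric merging decisions rather than in flow-scheduling tradeoffs. If the Steiner route proves awkward, a fallback is to reduce from Minimum Weight Set Cover by encoding each set as a long ``trunk'' path shared among several demands and each element as a demand that must be routed through at least one such trunk; the minimum number of trunks required to cover all elements then controls the minimum merged-path length.
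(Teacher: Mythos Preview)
Your plan is a genuinely different route from the paper's. The paper does not touch Euclidean Steiner Forest or Set Cover at all. Its entire argument is three lines: model each \emph{mergable element} $P=(p_{src_qdest_q,a},p_{src_{q_k}dest_{q_k},a_k})$ as a vertex, join two such vertices by an edge whenever their mergable elements share a common path, and then appeal to the NP-hardness of enumerating all maximal independent sets in the resulting graph. The intuition is that a feasible merging schedule is exactly an independent set in this conflict graph (two merges that reuse the same path cannot both be performed), so optimizing over mergings requires searching over maximal independent sets.

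Compared with that, your Steiner Forest reduction is far more ambitious and, if it could be made to work, would say something stronger: hardness would come from the intrinsic geometry of shared routing, not from a combinatorial conflict structure layered on top of already-constructed paths. The cost is exactly the obstacle you flagged. The paper's notion of ``mergable'' is local and pairwise (two specific paths, a sample-distance test, a length-reduction check, and a flow-feasibility check), and the optimization in (\ref{e5}) is over sequences of such pairwise merges applied to a \emph{fixed} initial family $\{p_{src_qdest_q,a}\}$ produced by MPM. Your reduction needs every relevant Steiner topology to be reachable from the seeded candidates via admissible pairwise merges; establishing that reachability is the whole proof, and nothing in the paper's model guarantees it. Your Set Cover fallback is closer in spirit to the paper's independent-set view and is probably the easier of your two options to push through, but it still requires engineering the initial path family so that ``use trunk $S$'' is expressible as a legal merge.

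One further point worth noting as you compare: the paper's sentence ``our problem can be reduced to the NP-hard problem of finding all maximal independent set'' has the reduction arrow pointing the wrong way for an NP-hardness claim; a rigorous version would reduce \emph{from} an independent-set-type problem \emph{to} the merging problem. Your proposal already has the direction right.
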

\begin{proof}
Let each mergable element represent a node in the network. Establish a link between nodes if two mergable elements contain a same path. Then, our problem can be reduced to the NP-hard problem of finding all maximal independent set in the network \cite{MIS88}.
\end{proof}

It is NP-hard to minimize the total length of all paths for $m$ routing demands by merging paths. The Merge algorithm (Algorithm \ref{alg3}) is proposed to try to minimize the total length of constructed paths while still meet the flow requirement $F_A$ of these $m$ routing demands. In the Merge algorithm, extra paths with longer length of routing demands are firstly deleted, and the extra flows $dF(src_qdest_q)$, $q=1,2,...,m$ that represents the excessive flow compared to flow demand $F_A(src_qdest_q)$ are recorded. The paths of a routing pair with more excess flow are more likely to be merged with other paths, so we sort the routing pairs by $dF(src_qdest_q)$, $q=1,2,...,m$ in descending order, and try to merge the paths with more excess flow (detailed in Algorithm \ref{alg3}). Finally, the Merge algorithm returns the newly merge path $p^{'}_{src_qdest_q}$, $q=1,2,...,m$ with $O(m^2)$ merge attempts between routing demands. The merging between paths depends on the positions of $(src_q, dest,q)$, $q=1,2,...,m$, and their flow requirement $F_A(src_qdest_q)$. In worst case, no path can be merged to reduce the total path length.

\subsection{Summary}
In this section, we first analyze a single routing demand of $(src_1,dest_1)$ and prove that the flow can reach $F_1$, $F_C$ with one path and multiple paths constructed respectively. Then, for multiple routing demands of $(src_q,dest_q)$, $q=1,2,...,m$, we prove that it is NP-hard to maximize the sum of the flow by constructing paths and show that the flow can achieve $F^{'}_C$ by MPM algorithm. For multiple routing demands with flow requirement, we prove that it is NP-hard to minimize the total length of all paths by merging path and propose a greedy algorithm of merging paths.

\begin{algorithm} \label{alg4}
\caption{Flow Demands Oriented Node Placement}
\KwIn{Source destination pairs $(src_q,dest_q)$ with flow requirement $F_A(src_qdest_q)$, $q=1,2,...,m$, Flow $f$, Interference range $R$, Maximal path count $C$}
\KwOut{The placement position $P$ of nodes}
$[F^{'}_C, p_{src_qdest_q}]=MPM((src_q,dest_q),f,R,C)$, $q=1,2,...,m$;\\
$p^{'}_{src_qdest_q}=Merge(F^{'}_C, F_A, p_{src_qdest_q})$, $q=1,2,...,m$;\\
Place nodes along the constructed paths $p^{'}_{src_qdest_q}$;\\
Record the position of nodes to $P$;\\
\Return $P$
\end{algorithm}

The Flow Demands Oriented Node Placement Algorithm is summarized in Algorithm \ref{alg4}. In line 1, the MPM algorithm (Algorithm \ref{alg2}) constructs paths to find the maximal flow of multiple routing demands that can be supported by the network. The MPM algorithm try to satisfy the flow $F_C$ obtained by the MP1 algorithm (Algorithm \ref{alg1}) and can achieve the flow of $F^{'}_C$. In line 2, given flow requirement $F_A$ of routing demands, The Merge algorithm (Algorithm \ref{alg3}) try to minimize the total length of paths. Then, the algorithm places nodes along the merge paths and returns the position of placed nodes.

Following the analysis above, proposed Algorithm \ref{alg4} processes routing demands in $O(m^2)$ times. The performance of the MPM algorithm depends on the positions of $(src_q, dest,q)$, $q=1,2,...,m$. In worst case, only one link can be activated for transmission in each time slot. The performance of the Merge algorithm depends on the positions of $(src_q, dest,q)$, $q=1,2,...,m$, and their flow requirement $F_A(src_qdest_q)$. In worst case, no path can be merged at all. We will examine the performance of proposed algorithm later in evaluation.

\section{Evaluation} \label{experiment}

\begin{figure*}[!htb]
\begin{minipage}[]{0.33\textwidth}
\centering\includegraphics[height=1.41in,width=1.9in]{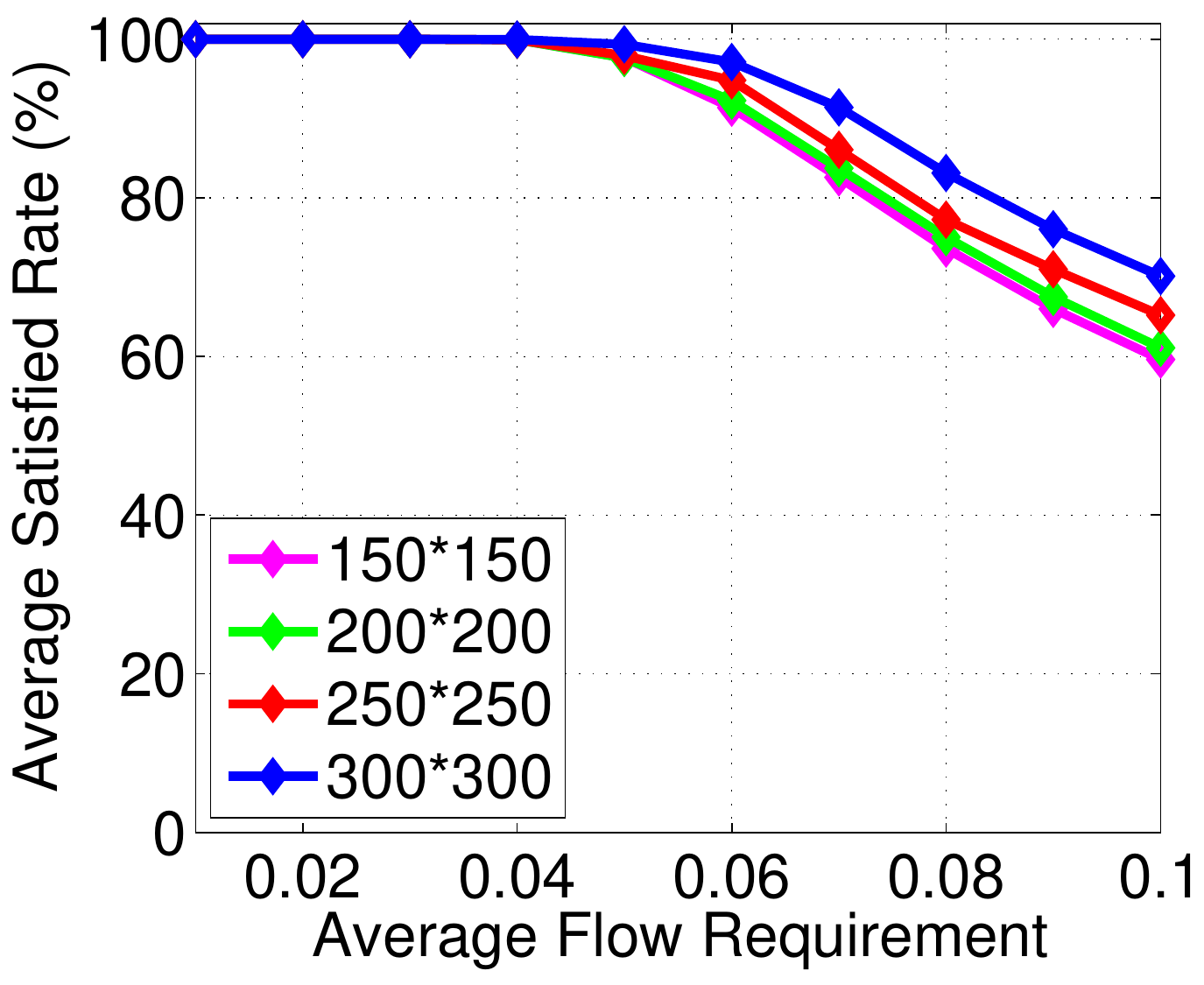}
\caption{\textrm{Scenario 1: Area Size}}\label{ef1}
\end{minipage}
\begin{minipage}[]{0.33\textwidth}
\centering\includegraphics[height=1.41in,width=1.9in]{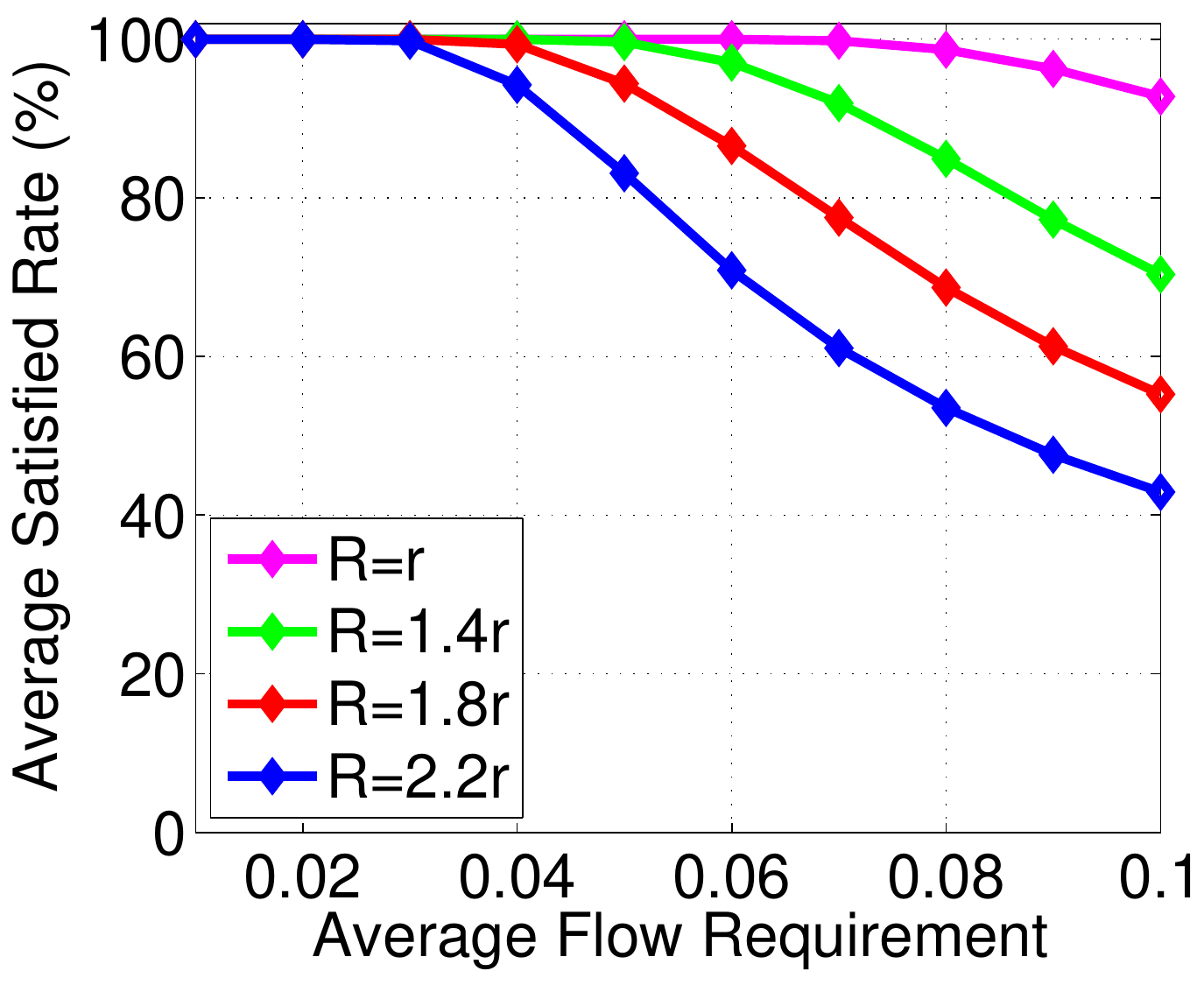}
\caption{\textrm{Scenario 1: Interference Range}}\label{ef2}
\end{minipage}
\begin{minipage}[]{0.33\textwidth}
\centering\includegraphics[height=1.41in,width=1.9in]{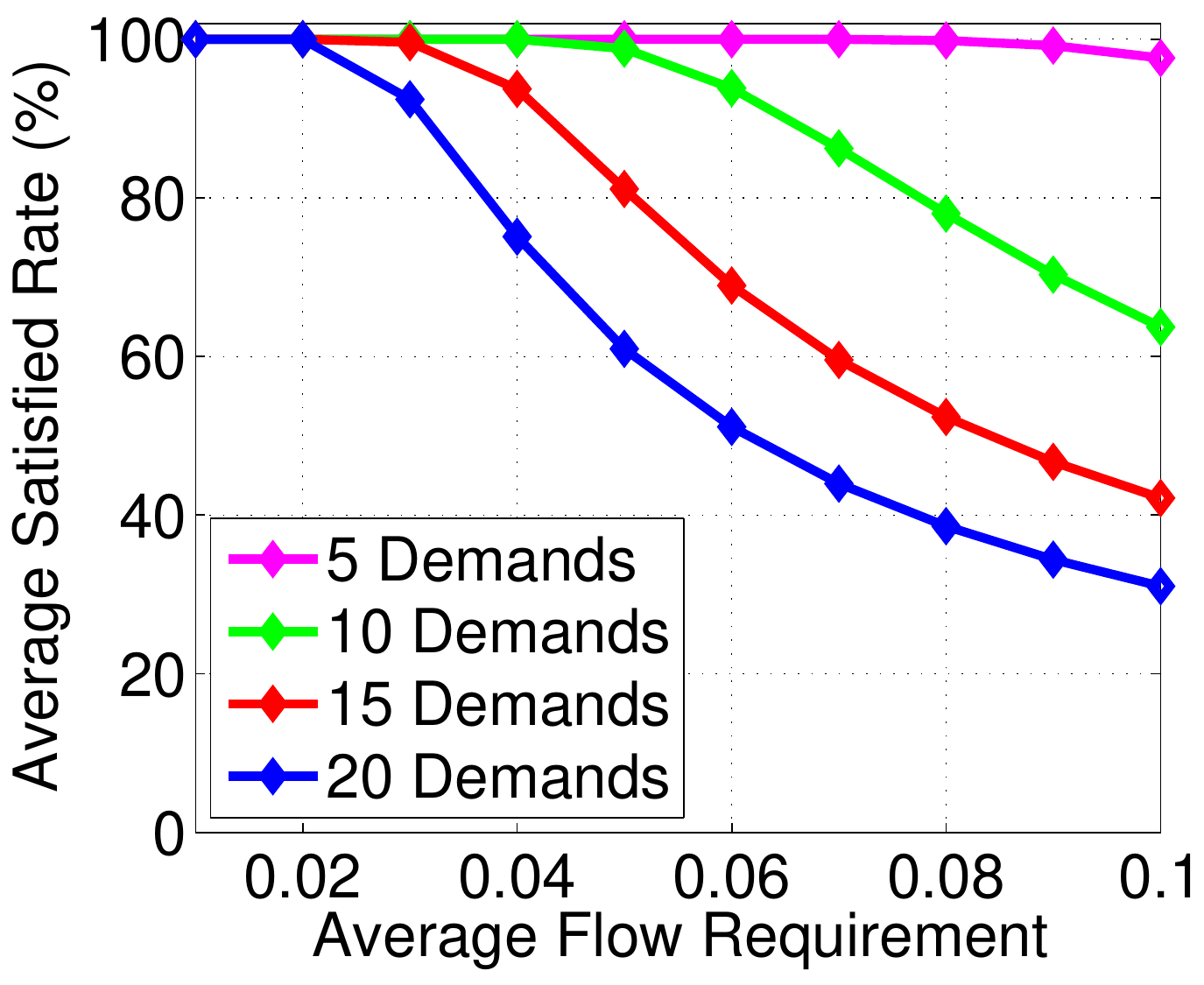}
\caption{\textrm{Scenario 1: Routing Demands}}\label{ef3}
\end{minipage}
\end{figure*}

\begin{figure*}[!htb]
\begin{minipage}[]{0.33\textwidth}
\centering\includegraphics[height=1.41in,width=1.9in]{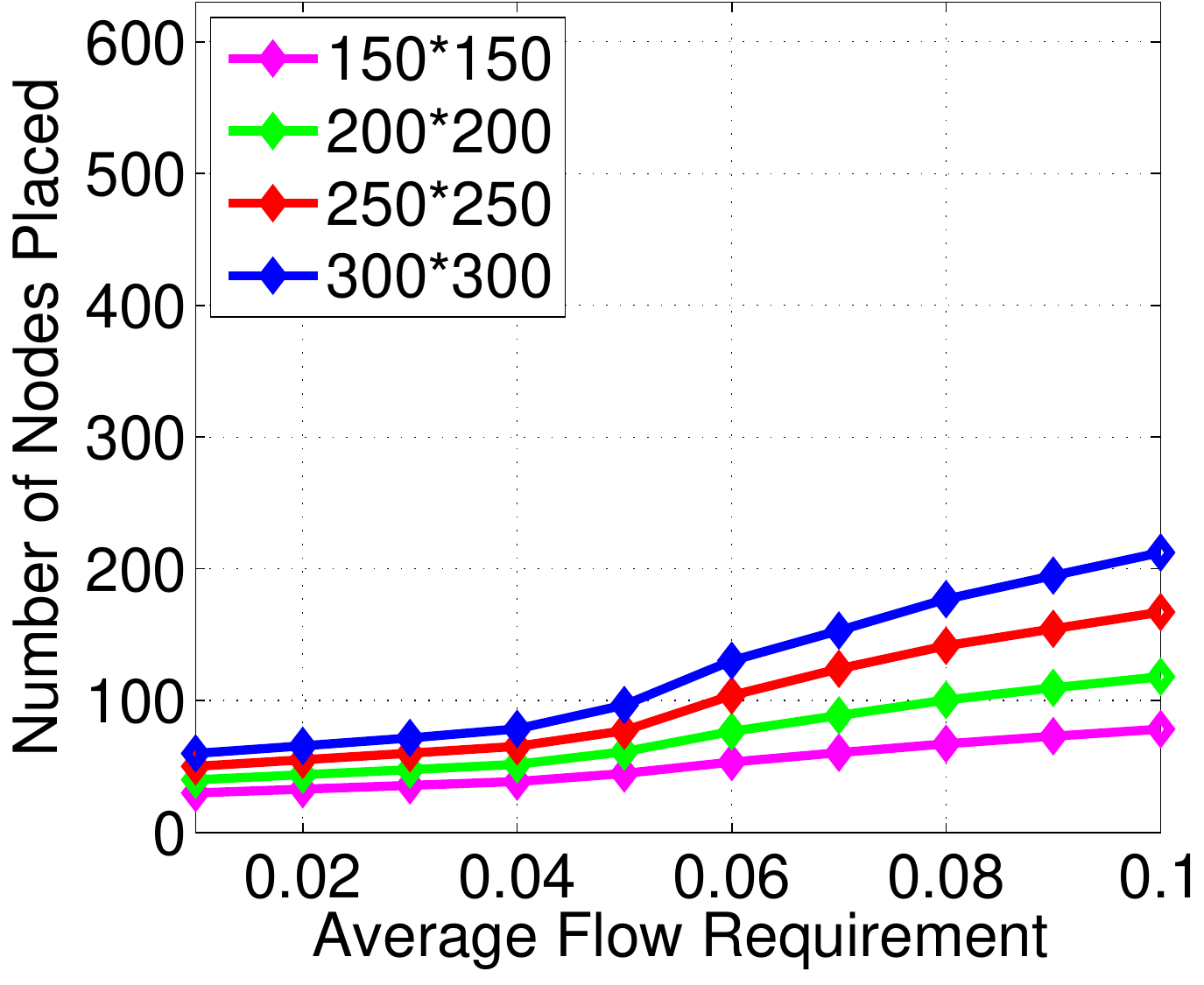}
\caption{\textrm{Scenario 1: Area Size}}\label{ef4}
\end{minipage}
\begin{minipage}[]{0.33\textwidth}
\centering\includegraphics[height=1.41in,width=1.9in]{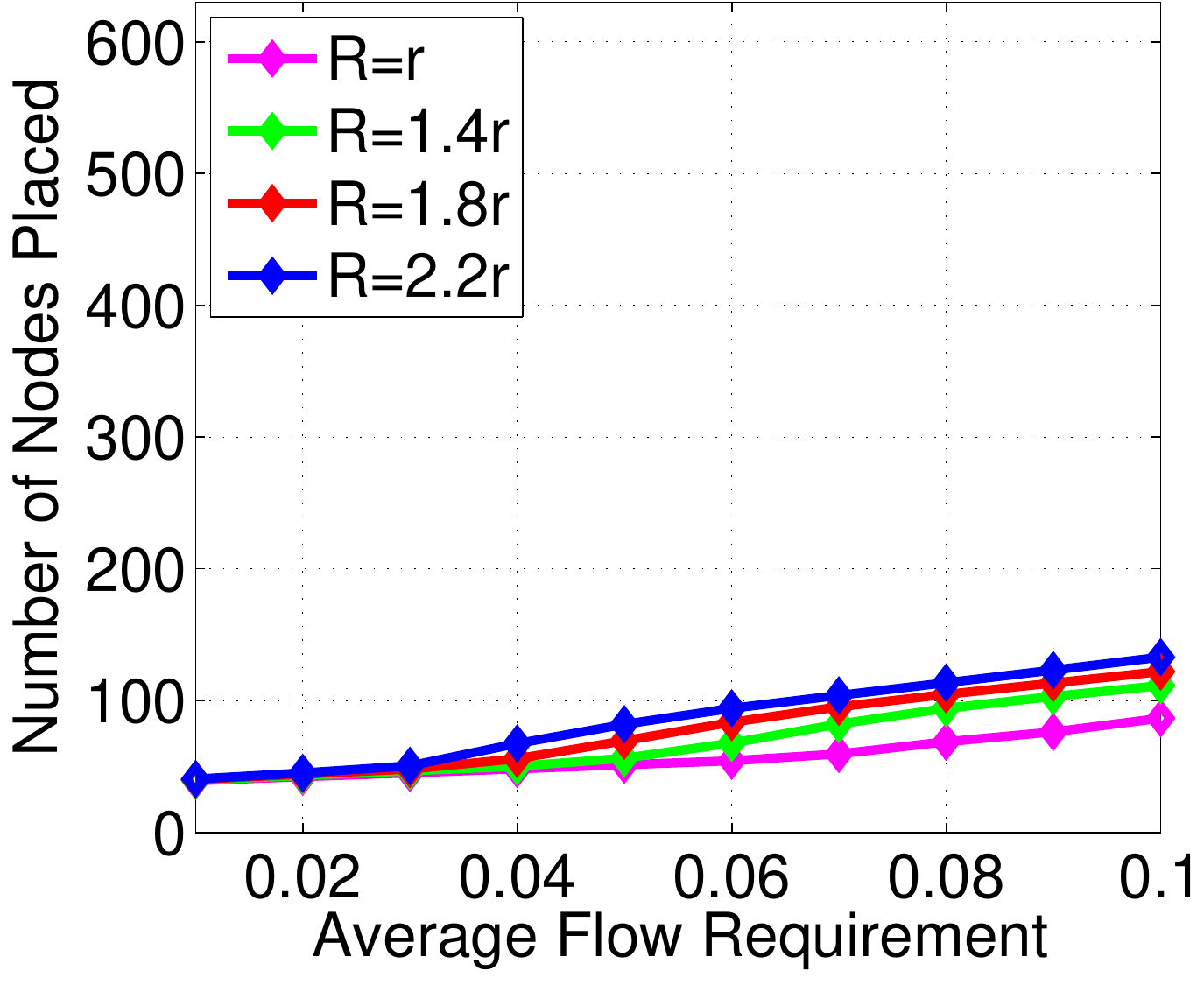}
\caption{\textrm{Scenario 1: Interference Range}}\label{ef5}
\end{minipage}
\begin{minipage}[]{0.33\textwidth}
\centering\includegraphics[height=1.41in,width=1.9in]{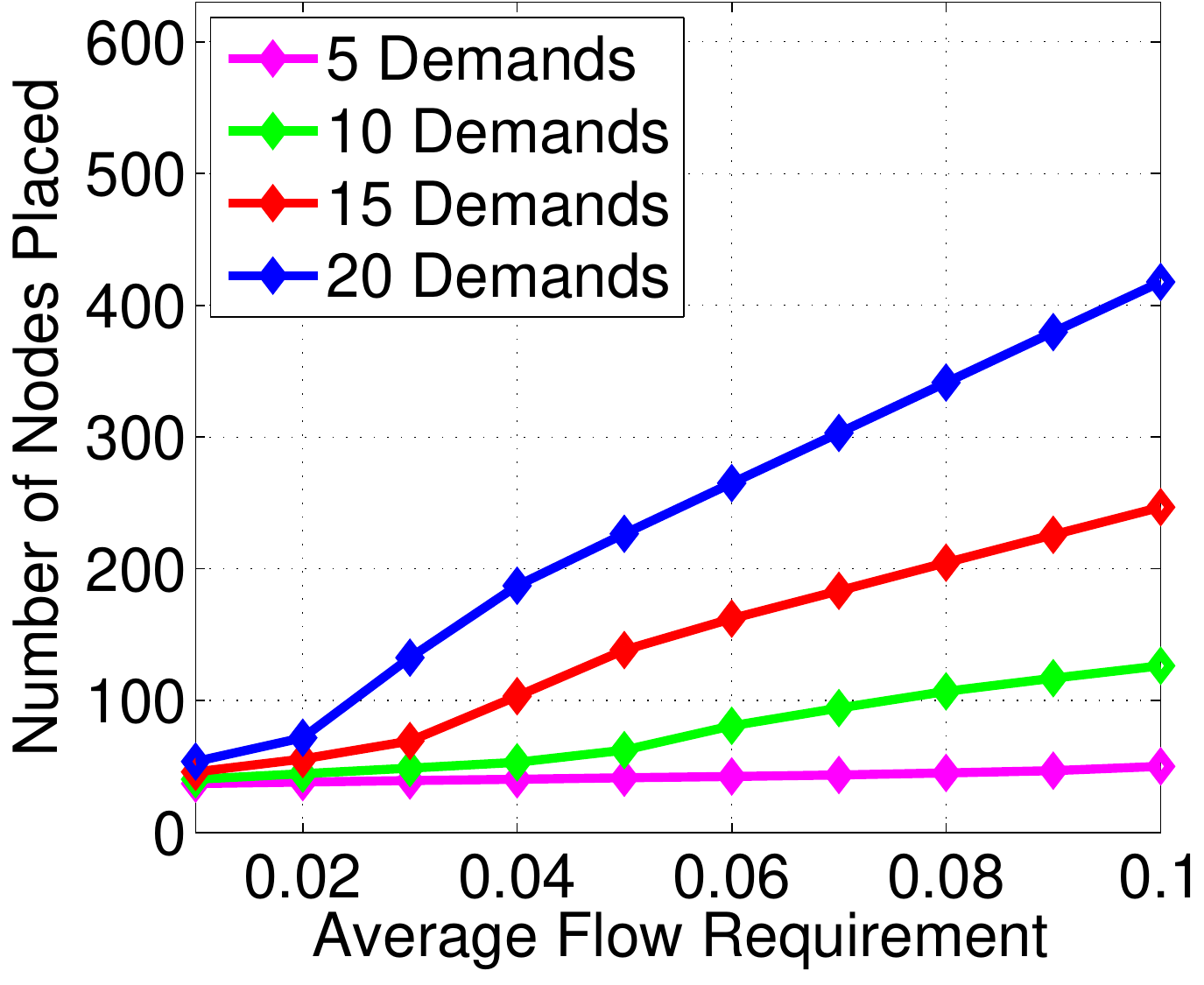}
\caption{\textrm{Scenario 1: Routing Demands}}\label{ef6}
\end{minipage}
\end{figure*}

\begin{figure*}[!htb]
\begin{minipage}[]{0.33\textwidth}
\centering\includegraphics[height=1.41in,width=1.9in]{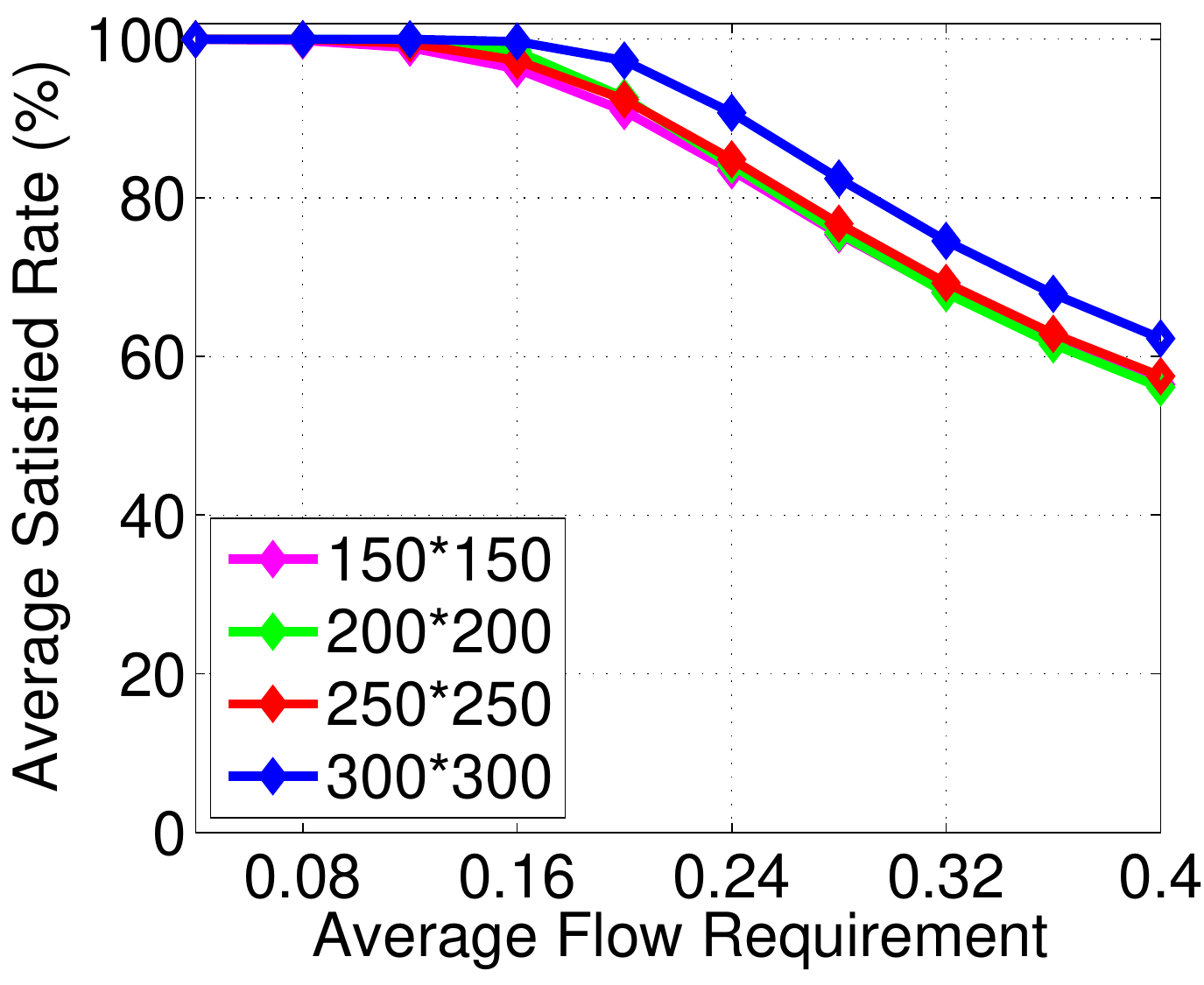}
\caption{\textrm{Scenario 2/3: Area Size}}\label{ef7}
\end{minipage}
\begin{minipage}[]{0.33\textwidth}
\centering\includegraphics[height=1.41in,width=1.9in]{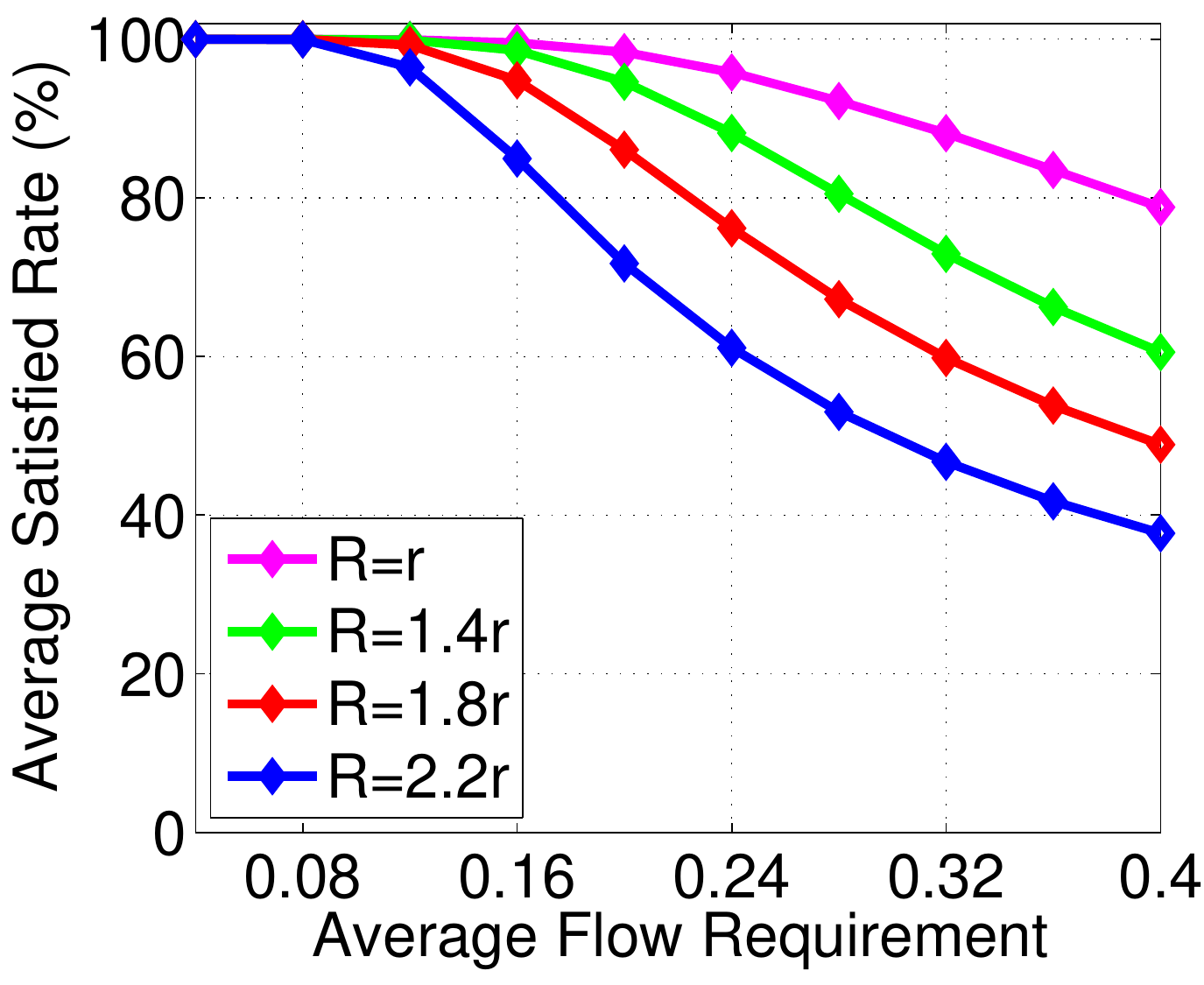}
\caption{\textrm{Scenario 2/3: Interference Range}}\label{ef8}
\end{minipage}
\begin{minipage}[]{0.33\textwidth}
\centering\includegraphics[height=1.41in,width=1.9in]{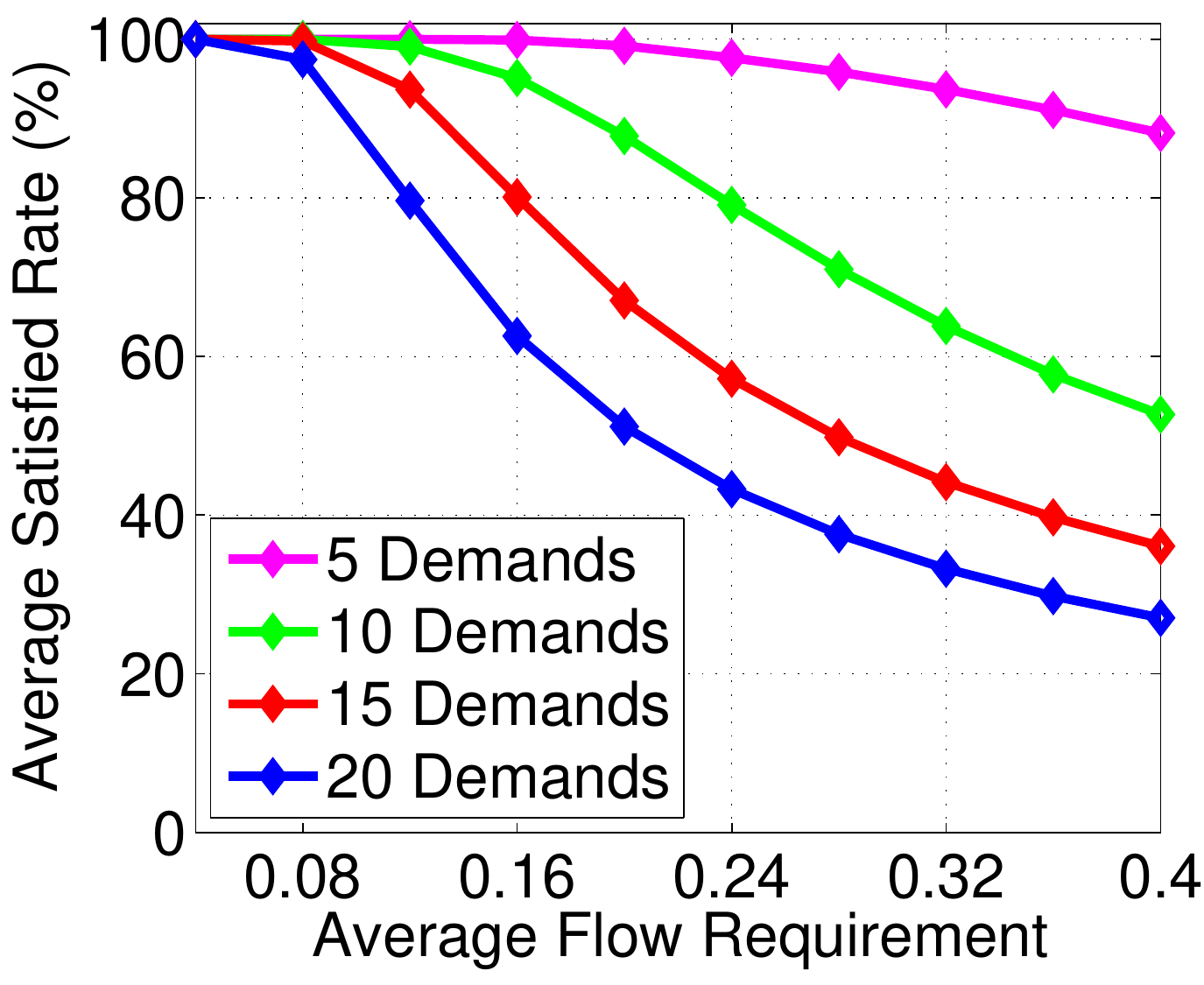}
\caption{\textrm{Scenario 2/3: Routing Demands}}\label{ef9}
\end{minipage}
\end{figure*}

\begin{figure*}[!htb]
\begin{minipage}[]{0.33\textwidth}
\centering\includegraphics[height=1.41in,width=1.9in]{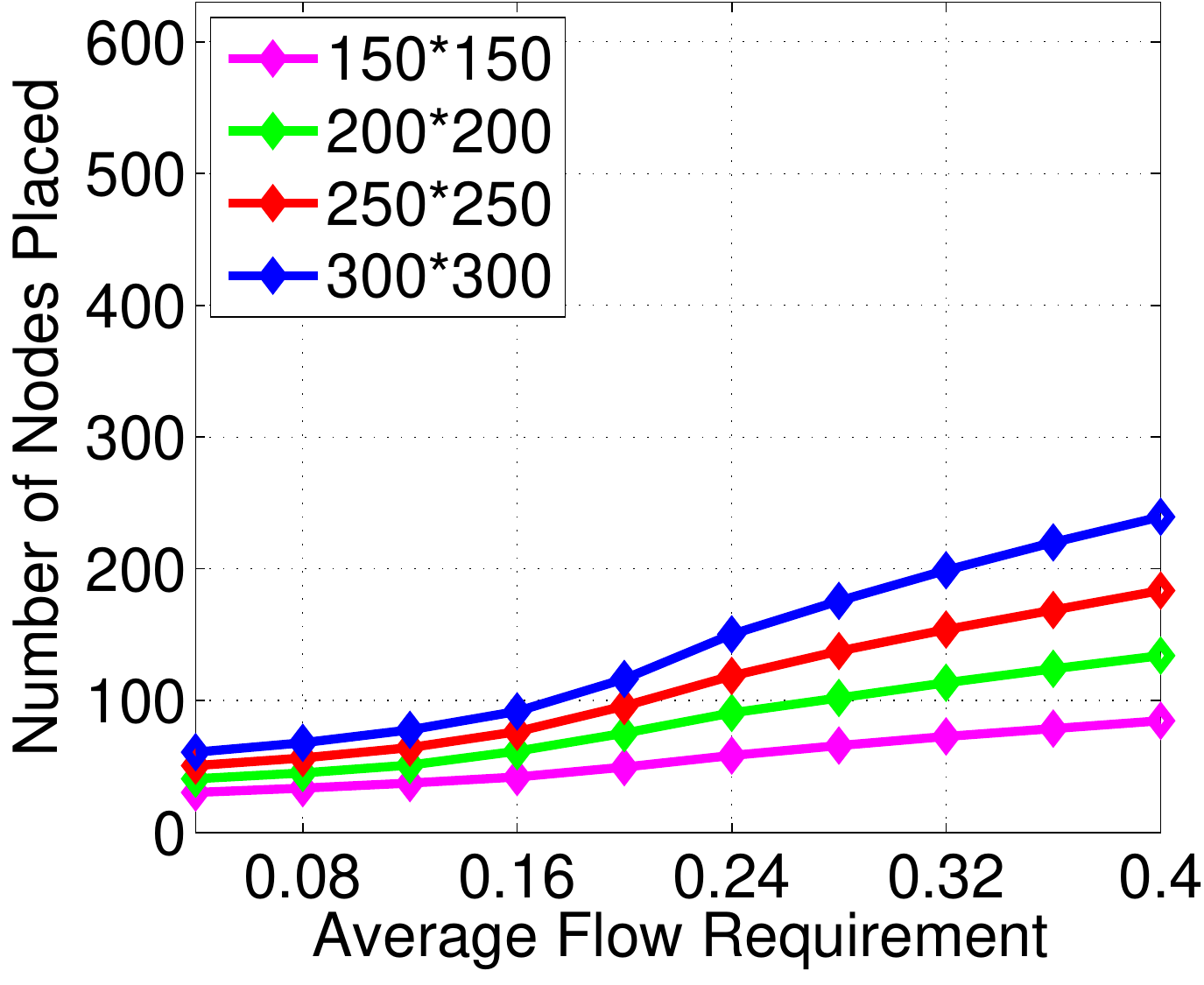}
\caption{\textrm{Scenario 2: Area Size}}\label{ef10}
\end{minipage}
\begin{minipage}[]{0.33\textwidth}
\centering\includegraphics[height=1.41in,width=1.9in]{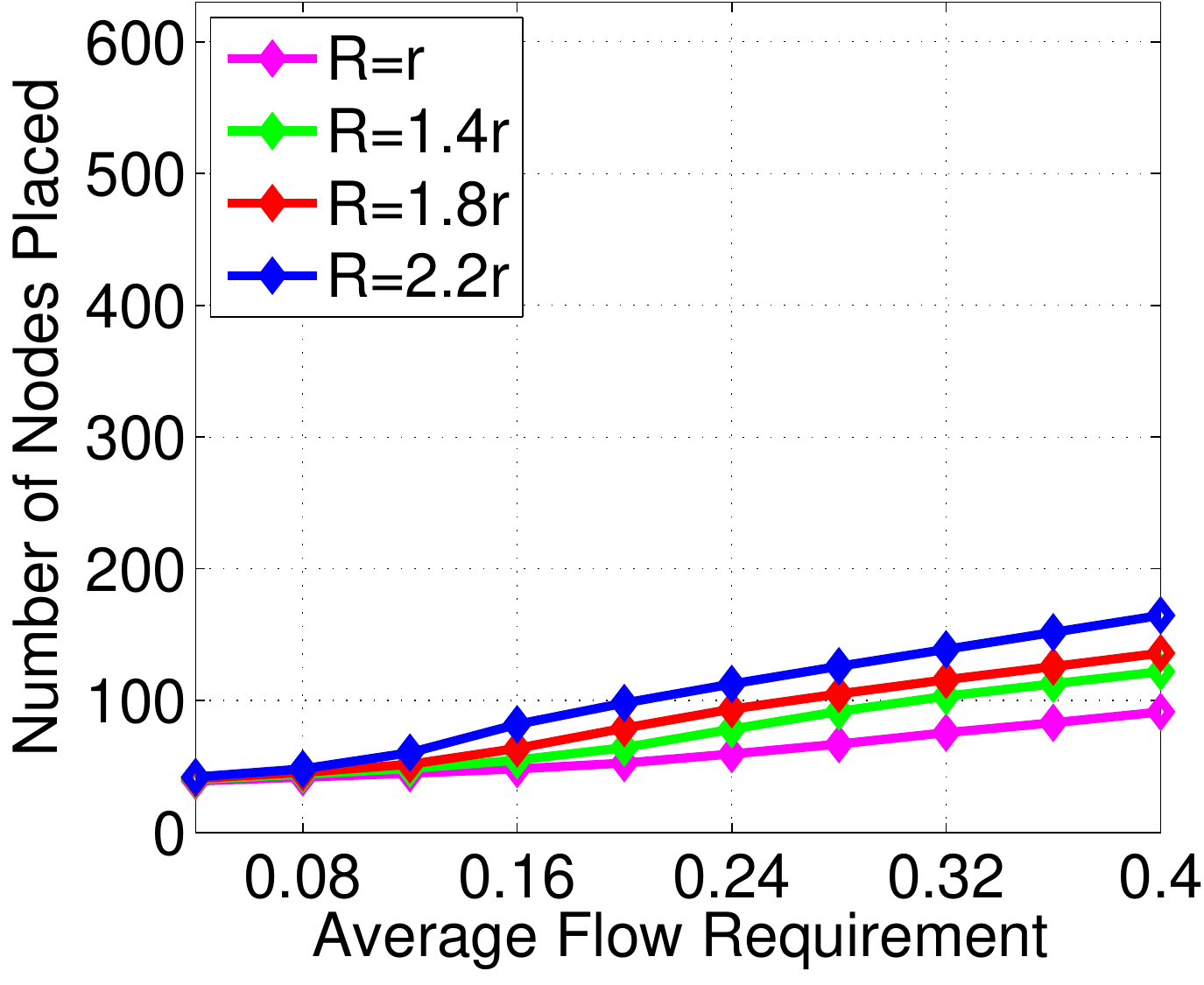}
\caption{\textrm{Scenario 2: Interference Range}}\label{ef11}
\end{minipage}
\begin{minipage}[]{0.33\textwidth}
\centering\includegraphics[height=1.41in,width=1.9in]{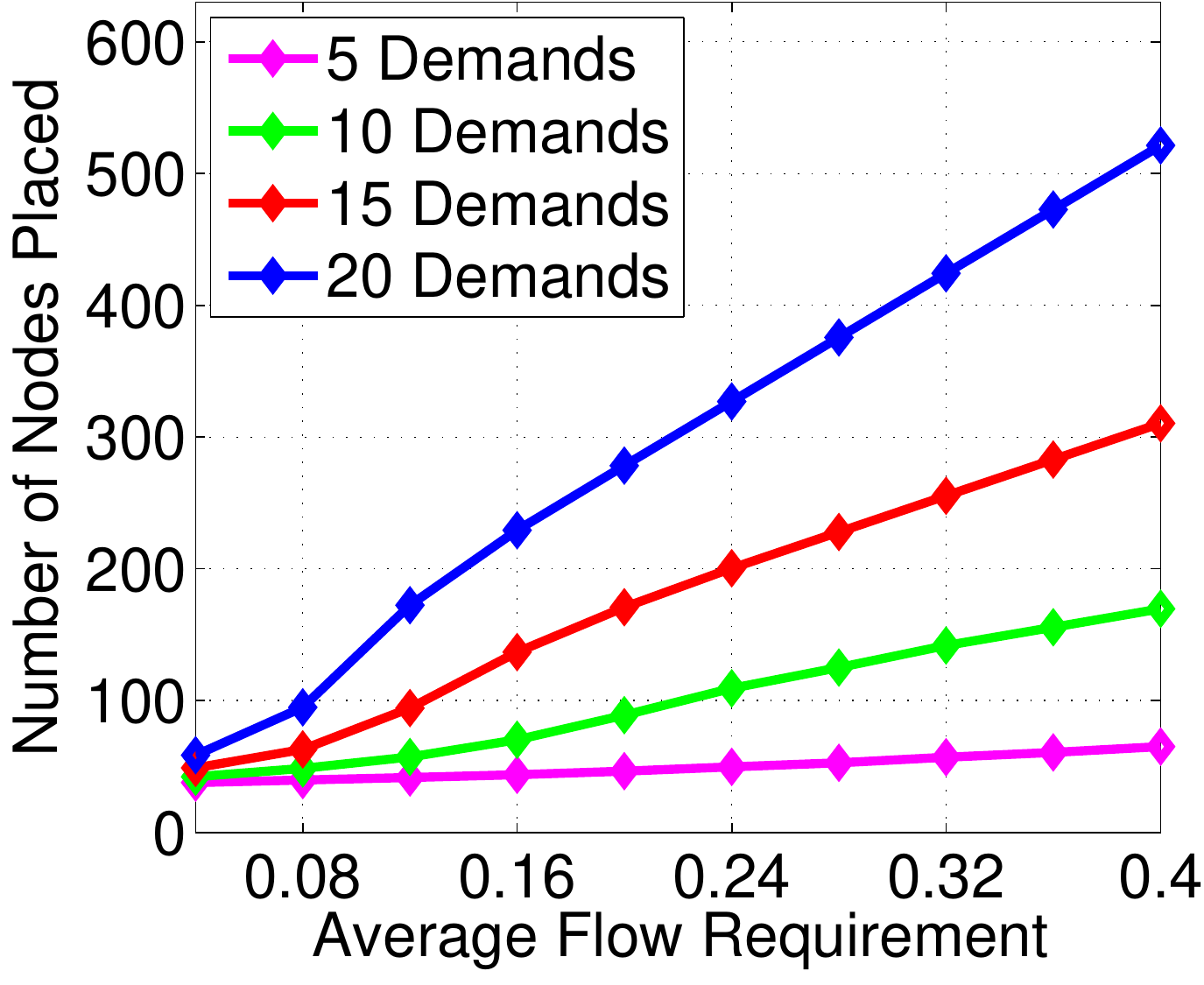}
\caption{\textrm{Scenario 2: Routing Demands}}\label{ef12}
\end{minipage}
\end{figure*}

\begin{figure*}[!htb]
\begin{minipage}[]{0.33\textwidth}
\centering\includegraphics[height=1.41in,width=1.9in]{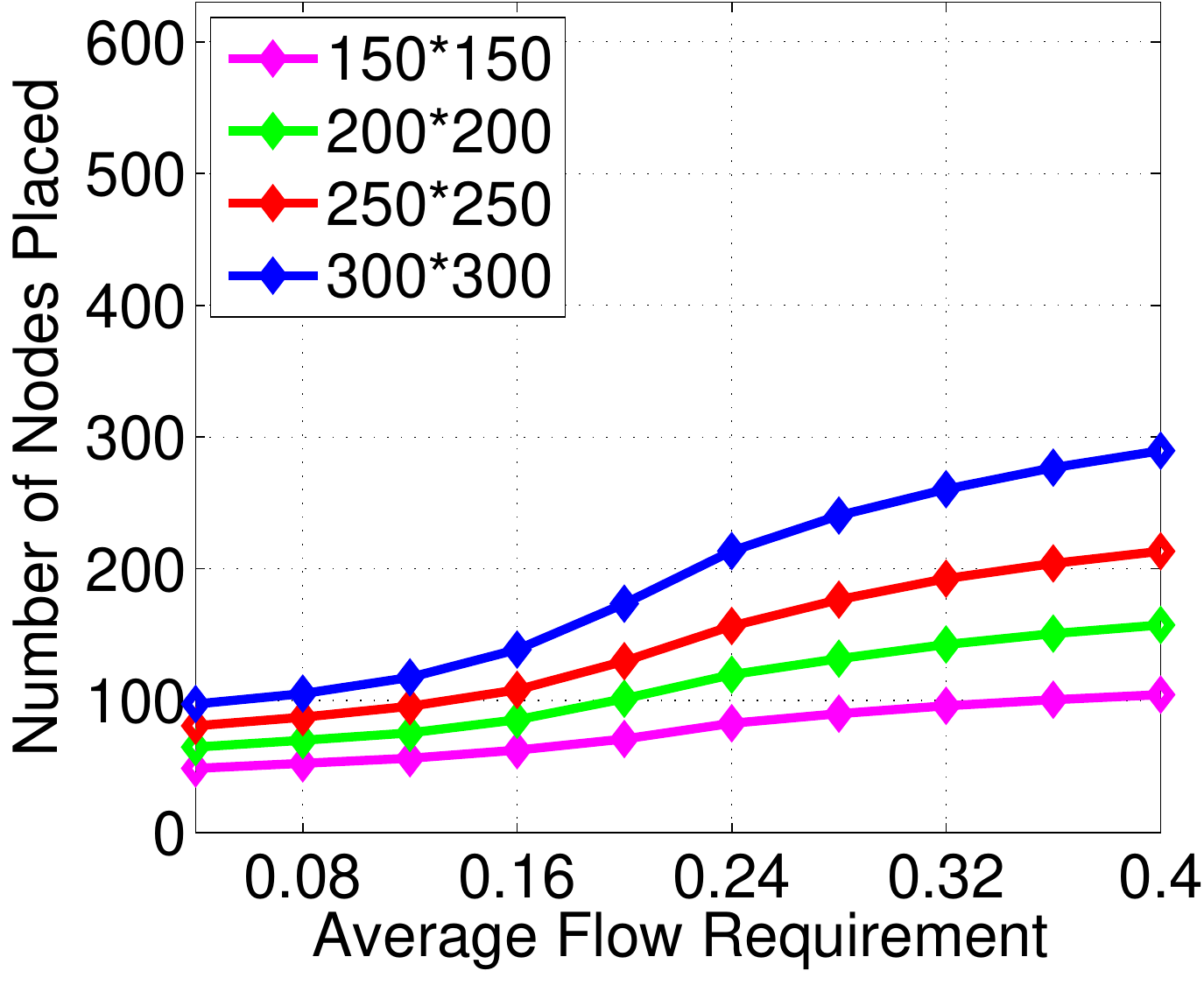}
\caption{\textrm{Scenario 3: Area Size}}\label{ef13}
\end{minipage}
\begin{minipage}[]{0.33\textwidth}
\centering\includegraphics[height=1.41in,width=1.9in]{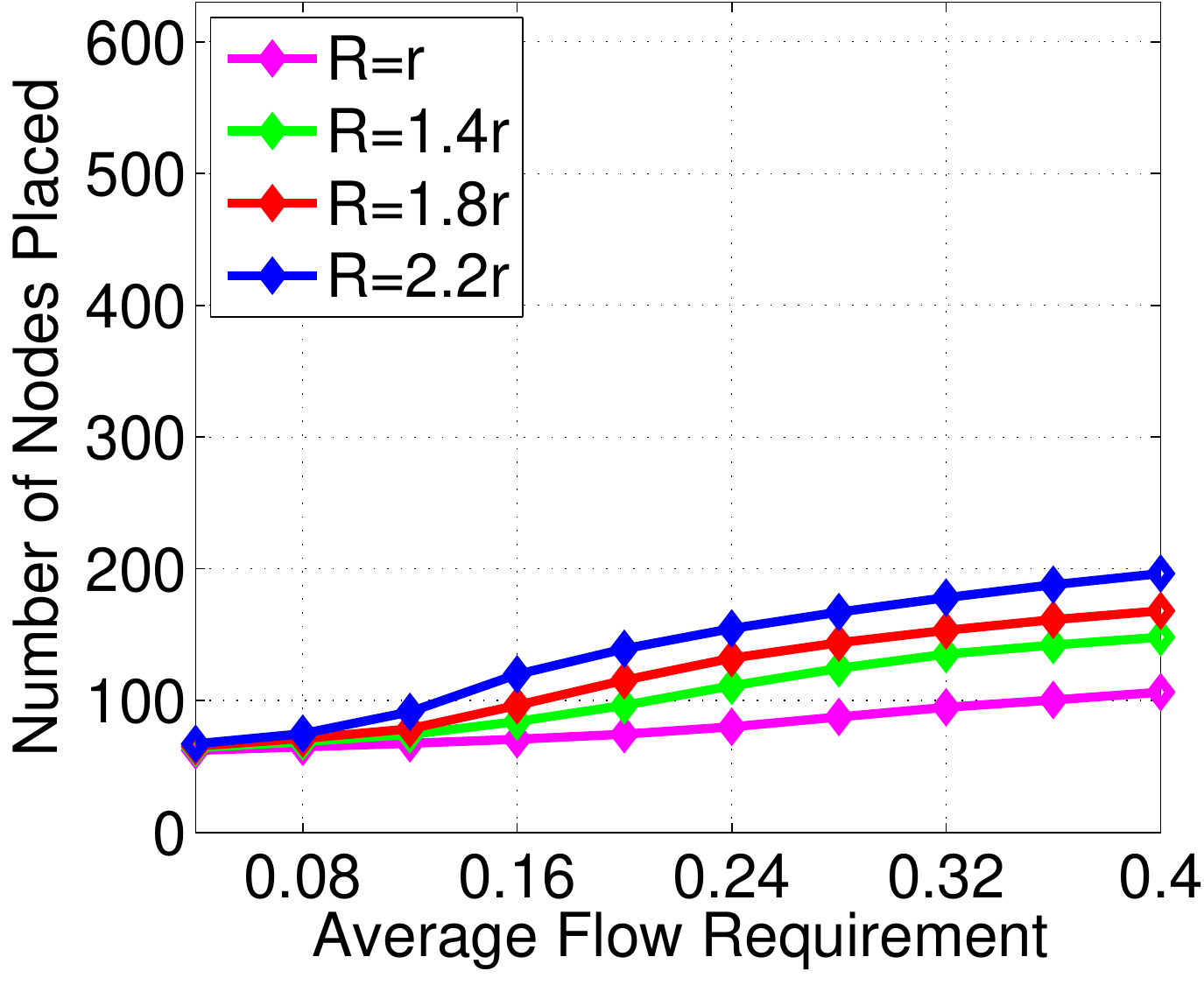}
\caption{\textrm{Scenario 3: Interference Range}}\label{ef14}
\end{minipage}
\begin{minipage}[]{0.33\textwidth}
\centering\includegraphics[height=1.41in,width=1.9in]{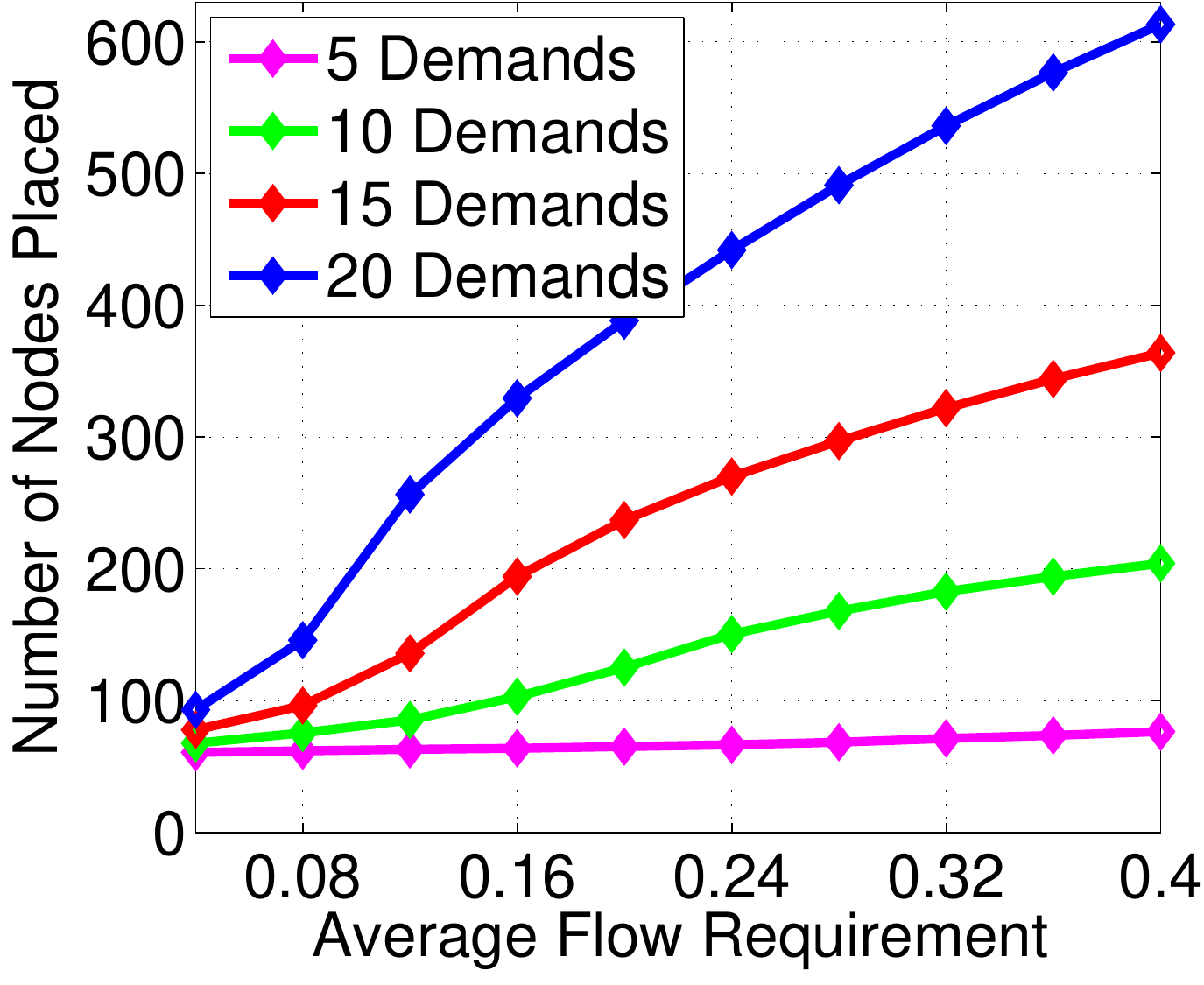}
\caption{\textrm{Scenario 3: Routing Demands}}\label{ef15}
\end{minipage}
\end{figure*}

\subsection{The Metric to Measure the Efficiency of Algorithms}
We evaluate algorithms proposed in Section \ref{node_placement} through simulations. Three scenarios, Data Aggregation, Demands with Definite Flow Requirement and Nodes with Unknown Flow Requirement, are used for evaluation. By default setting, nodes are placed in a $200 \times 200$ square region. There are $10$ routing demands. The transmission range $r$ is set to $10$ and the interference range $R$ is set to $\sqrt{2}r$. The flow $f$ is set to $1$. For each scenarios, the graph is randomly generated for $100$ times. In evaluation of these scenarios, the area size is changed in $150 \times 150$, $200 \times 200$, $250 \times 250$ and $300 \times 300$. The Interference range is changed $R=r$, $R=1.4r$, $R=1.8r$ and $R=2.2r$. The number of routing demands is changed in $5$, $10$, $15$ and $20$. We define the Satisfied Rate as
\begin{equation} \label{e6}
SR_q=\begin{cases}
\displaystyle 100\% & f^{'}_{src_qdest_q} \geq f_{src_qdest_q}  \\
\displaystyle \frac{f^{'}_{src_qdest_q}}{f_{src_qdest_q}} \times 100\% & f^{'}_{src_qdest_q} < f_{src_qdest_q}   \\
\end{cases}
\end{equation}
where $f_{src_qdest_q}$ is the flow requirement of $(src_q, dest_q)$ and $f^{'}_{src_qdest_q}$ is the flow actually can be achieved.

Based on equation \ref{e6}, we define the Average Satisfied Rate for $m$ routing demands as
\begin{equation} \label{e7}
ASR=\frac{\sum_{q=1}^{m} SR_q}{m}
\end{equation}
The higher the average satisfied rate can be achieved, the better the flow demands can be met.

\subsection{Experiments}

\subsubsection{Scenario 1: Data Aggregation}
In scenario 1, data are gathered from a set of source nodes and sent to a destination node. For this scenario, $m$ sources nodes and $1$ destination node are randomly generated in graph. The flow requirement is randomly generated for each source.

With the setting of flow $f=1$, the total flow of all source-destination pairs reaching the destination has no more than flow of $1$ for each time slot. Thus, the average flow requirement of each pair is generated from $0.01$ to $0.1$ in evaluation. The impacts of changing area size, interference range and routing demands on average satisfied rate are shown in figure \ref{ef1}, \ref{ef2} and \ref{ef3} respectively. The impacts of changing area size, interference range and routing demands on the total number of nodes used for placement are shown in figure \ref{ef4}, \ref{ef5} and \ref{ef6} respectively. The results are summarized later in section \ref{experiment_summary}.

\subsubsection{Scenario 2: Demands with Definite Flow Requirement}
In scenario 2, there is a set of $m$ routing demands $(src_q, dest_q)$ with definite flow requirement $f_{src_qdest_q}$, $q=1,2,...,m$. For this scenario, $m$ routing demands are randomly generated in graph. The flow requirement is randomly generated for each demand.

The average flow requirement of each pair is generated from $0.04$ to $0.4$ in evaluation. The impacts of changing area size, interference range and routing demands on average satisfied rate are shown in figure \ref{ef7}, \ref{ef8} and \ref{ef9} respectively. The impacts of changing area size, interference range and routing demands on the total number of nodes used for placement are shown in figure \ref{ef10}, \ref{ef11} and \ref{ef12} respectively. The results are summarized later in section \ref{experiment_summary}.

\subsubsection{Scenario 3: Nodes with Unknown Flow Requirement}
In scenario 3, there are a set of nodes that have been deployed, but we do not know the definite routing demands and flow requirement. For this scenario, a set of $2m$ nodes are randomly generated. $m$ nodes out of these nodes are randomly selected as source nodes and the other $m$ nodes are selected as destination nodes.

As the set of nodes are with unknown flow requirement, the paths constructed between source-destination pairs try to fulfill the flow of $F_C$ (Theorem \ref{thm_Fc}) using MPM algorithm (Algorithm \ref{alg2}). In evaluation, the average flow requirement of each pair is generated from $0.04$ to $0.4$. Actually, the results on average satisfied rate is the same as the results in scenario 2 (figure \ref{ef7}, \ref{ef8} and \ref{ef9}). The difference between scenario 3 and scenario 2 is that the flow requirement is unknown in scenario 3. Thus, the Merge function in algorithm \ref{alg4} does not merge paths in scenario 3 and more nodes need to be placed than scenario 2. The impacts of changing area size, interference range and routing demands on the total number of nodes used for placement are shown in figure \ref{ef13}, \ref{ef14} and \ref{ef15} respectively.

\subsection{Analysis} \label{experiment_summary}
The evaluation results are summarized as follows:

\begin{enumerate}
  \item \emph{The result of average satisfied rate.} The average satisfied rate is slightly higher with a larger area size, since a larger area has lower probability of interference between nodes. The average satisfied rate is lower with a larger interference range, since a larger interference range has higher probability of interference between nodes. The average satisfied rate is lower with more routing demands, since more routing demands with more nodes cause higher probability of interference between nodes. As can be seen from figure \ref{ef1}, \ref{ef2}, \ref{ef3}, \ref{ef7}, \ref{ef8} and \ref{ef9}, the average satisfied rate of flow demands reduces slowly when increasing the level of flow requirements.
  \item \emph{The result of total number of nodes used.} Slightly less nodes needed to be placed in scenario 1 than scenario 2, since more paths can be merged in data aggregation scenario. Less nodes needed to be placed in scenario 2 than scenario 3, since more nodes are needed to satisfy possible larger flow in scenario 3. (Average 25.4\%, 26.6\% and 24.1\% nodes haven been merged in figure \ref{ef10}, \ref{ef11} and \ref{ef12} than figure \ref{ef13}, \ref{ef14} and \ref{ef15} respectively.) The result verify the efficiency of merging routing paths.
\end{enumerate}

\section{Conclusion and Discussion}\label{conclusion}
The flow demands oriented node placement problem has been addressed in this paper. The problem is solved in three steps of calculating the maximal flow for single routing demand, calculating the maximal flow for multiple routing demands and finding the minimal number of nodes for routing demands with flow requirement. For single routing demand, we conduct its theoretical maximal flow can be achieved. For multiple routing demands, we prove both the problem of calculating the maximal flow and finding the minimal number of nodes for placement are NP-hard and propose polynomial-time complexity algorithms. The proposed algorithms are extensively evaluated in the scenarios of data aggregation, demands with definite flow requirement and nodes with unknown flow requirement, which verify its efficiency.

In future, we want to extend our work to a more general model setting. The synchronous time slotted system can be extended to asynchronous system by defining a probability model on transmission. The effect of delay and packet loss can also be introduced in the probability model. The studies on flow demands oriented node placement in this paper can still be applied with these changes.

\end{document}